\newtheorem{theorem}{Theorem}[section]
\newtheorem{lemma}[theorem]{Lemma}
\newenvironment{proof}[1][Proof]{\begin{trivlist}
\item[\hskip \labelsep {\bfseries #1}]}{\end{trivlist}}
\newcommand{\qed}{\nobreak \ifvmode \relax \else
      \ifdim\lastskip<1.5em \hskip-\lastskip
      \hskip1.5em plus0em minus0.5em \fi \nobreak
      \vrule height0.75em width0.5em depth0.25em\fi}
\begin{document}

\preprint{}

\title{Moment-Based Evidence for Simple Rational-Valued Hilbert-Schmidt Generic $2 \times 2$ Separability Probabilities}
\author{Paul B. Slater (with appendix by Charles F. Dunkl)}%
\email{slater@kitp.ucsb.edu}
\affiliation{%
University of California, Santa Barbara, CA 93106-4030\\
}%
\date{\today}

\begin{abstract}
Employing Hilbert-Schmidt measure, we explicitly compute and analyze 
a number of determinantal product (bivariate) moments $|\rho|^k |\rho^{PT}|^n$, $k,n=0,1,2,3,\ldots$, $PT$ denoting partial transpose, for both generic (9-dimensional) two-rebit ($\alpha=\frac{1}{2}$) and generic 
(15-dimensional) two-qubit  ($\alpha=1$)  density matrices 
$\rho$. The results are, then, incorporated by Dunkl into  a  {\it general} formula 
(Appendix~\ref{ComplexConjectures}), parameterized by $k, n$ and 
$\alpha$, with the case $\alpha=2$,  presumptively corresponding to generic (27-dimensional) quaternionic systems. Holding the Dyson-index-like parameter 
$\alpha$ fixed, the induced {\it univariate} moments 
 $(|\rho| |\rho^{PT}|)^n$ and $|\rho^{PT}|^n$ are inputted into a Legendre-polynomial-based (least-squares) probability-distribution reconstruction algorithm of Provost ({\it Mathematica J.}, {\bf{9}}, 727 (2005)), yielding $\alpha$-specific separability probability estimates. Since, as the number of inputted moments grows, estimates based on  the variable $|\rho| |\rho^{PT}|$ strongly decrease,  while ones employing 
$|\rho^{PT}|$ strongly increase (and converge faster),  the gaps between upper and lower estimates diminish, yielding sharper and sharper bounds. Remarkably, 
for $\alpha = 2$, with the use of 2,325 moments, a separability-probability lower-bound 0.999999987 as large as $\frac{26}{323} \approx 0.0804954$ is found. For $\alpha=1$, based on 2,415 moments, a lower bound results that is 0.999997066 times as large as 
$\frac{8}{33} \approx 0.242424É$, a (simpler still) fractional value that had previously been conjectured ({\it J. Phys. A}, {{\bf  40}}, 14279 (2007)). Furthermore, for $\alpha=\frac{1}{2}$, employing 3,310 moments, the lower bound is 0.999955 times as large as  $\frac{29}{64} = 0.453125$, a rational value previously considered  ({\it J. Phys. A}, {{\bf  43}}, 195302 (2010)).
\end{abstract}

\pacs{Valid PACS 03.67.Mn, 02.30.Cj, 02.30.Zz, 02.50.Sk, 02.40.Ft}
\keywords{composite quantum systems, probability distribution moments,
probability distribution reconstruction, Peres-Horodecki conditions, Legendre polynomials, partial transpose, determinant of partial transpose, two qubits, two rebits, Hilbert-Schmidt metric, Bures metric, moments, separability probabilities, quaternionic quantum mechanics, determinantal moments, inverse problems, least squares}

\maketitle

\tableofcontents
\section{Introduction}
 In a much cited paper \cite{ZHSL}, {\. Z}yczkowski, Horodecki,
Sanpera and Lewenstein expanded upon 
``three main reasons''--``philosophical'', ``practical''
and ``physical''--for attempting to evaluate the probability that mixed states of composite quantum systems are separable in nature. Pursuing such a research agenda, it was conjectured \cite[sec. IX]{slater833}--based on "a confluence of
numerical and theoretical results"--that the separability probabilities of generic (15-dimensional) two-qubit 
and (9-dimensional) two-rebit quantum systems, in terms of the Hilbert-Schmidt/Euclidean/flat (HS) measures \cite{szHS,ingemarkarol}, are $\frac{8}{33} \approx 0.242424$ and $\frac{8}{17} \approx 0.470588$, respectively.
In this study, we shall avail ourselves of newly-proposed formulas of Dunkl 
(Appendix~\ref{appDunkl}) for (bivariate) moments 
of products of determinants of density matrices ($\rho$) and of their partial transposes ($\rho^{PT}$) \cite{asher,michal} to investigate these hypotheses from a novel  perspective, as well as extend our analyses beyond the strictly two-rebit and two-qubit frameworks.
(To be fully explicit, we note here that both [symmetric] two-rebit and [Hermitian] two-qubit $4 \times 4$ density matrices $\rho$ have unit trace and nonnegative eigenvalues, while their partial transposes $\rho^{PT}$ can be obtained by transposing in place the four $2 \times 2$ blocks of $\rho$. The Hilbert-Schmidt metric--from which the corresponding measure can, of course, be derived--is defined by the line
element squared, $\frac{1}{2} \mbox{Tr}[(\mbox{d}\rho)^2]$ \cite[eq. (14.29)]{ingemarkarol}.)

Reconstructions of probability distributions based on these product 
moment formulas of Dunkl do prove to be  highly supportive of the specific  HS two-qubit conjecture 
(sec.~\ref{TWOQUBITS}), while definitively ruling out its two-rebit counterpart (sec.~\ref{TWOREBITS}), but emphatically not a later advanced value of 
$\frac{29}{64} = 0.453125$ \cite[p. 6]{advances}.
Extending these analyses from the real ($\alpha=\frac{1}{2}$) and complex ($\alpha=1$) cases to the (presumptively, since we lack relevant computer-algebraic determinantal moment calculations) generic (27-dimensional) quaternionic ($\alpha=2$) instance \cite{asher2,adler,baez,batle2}, in which the off-diagonal entries of the $4 \times 4$ density matrices can be quaternions, we find that the value 
$\frac{26}{323} \approx 0.0804954$ fits our moment-based computations, may we say, amazingly well (sec.~\ref{TWOQUATBITS}). Nevertheless, the apparently formidable challenges of rigorously proving  the determinantal moment formulas of Dunkl and/or the conjectured simple fractional separability probabilities certainly remain.
(To again be explicit, the only {\it rigorously} demonstrated results reported in this paper are those we have been able to obtain through computer algebraic [Mathematica]
methods--using the Cholesky-decomposition parameterization of $\rho$--for the moments of $|\rho|^k |\rho^{PT}|^n$ for $n=1, 2, \ldots,13$ for the two-rebit systems and $n=1, 2, 3, 4$ for the two-qubit systems [sec.~\ref{DMDPM}], and $n=1$ for their qubit-qutrit [$6 \times 6$] counterparts [sec.~\ref{QubitQutrit}], as well as $n=1,\ldots,10$ for minimally degenerate two-rebit systems [sec.~\ref{minimally}]. Aside from the presentation and discussion of these results, the paper is concerned with the [unproven] generalization to arbitrary $n$ by Dunkl of these specific results, and its apparent successful application in  probability-distribution reconstruction procedures [sec.~\ref{PROVOSTreconstruction}]. This latter step is taken in order to examine anew and extend certain conjectures as to the specific values of the separability probabilities, the properties of which were first investigated by {\. Z}yczkowski, Horodecki, Sanpera and Lewenstein \cite{ZHSL}.)

In marked contrast to the finite-dimensional focus in this study on $2 \times 2$ quantum systems (and, marginally, on $2 \times 3$ systems [sec.~\ref{QubitQutrit}]), let us note the (asymptotically-based) 
conclusion of Ye that "the probability of finding separable quantum states within quantum states is extremely small and the Peres-Horodecki PPT criterion as tools to detect separability is imprecise for large $N$, in the sense of both Hilbert-Schmidt and Bures volumes" \cite[p. 14]{ye2}. (The Bures distance measures the length of a curve within the cone of positive operators on the Hilbert space 
\cite[sec. 9.4]{ingemarkarol}, while the Bures volume of the set of mixed states is remarkably equal to the volume of an $(N^2-1)$-dimensional hypersphere of radius $\frac{1}{2}$ \cite[p. 351]{ingemarkarol}.)
Also, contrastingly, to the predominantly "nondegenerate/full-rank" objectives here (cf. sec.~\ref{minimally}), Ruskai and Werner have demonstrated that 'bipartite states of low rank are almost surely entangled" 
\cite{RuskaiWerner}.
\section{Density-Matrix Determinantal Product Moments} \label{DMDPM}
Let us begin our investigation into the indicated statistical aspects of the "geometry of quantum states" \cite{ingemarkarol,BookReview} by noting the two following special cases--which will be extended in certain bivariate directions--of the (univariate determinantal moment) formulas \cite{csz}[eq. (3.2)] 
(cf. \cite[Theorem 4]{andai}):
\begin{equation} \label{firstold}
\left\langle |\rho|^k \right\rangle_{2-rebit/HS}=945 \Big( 4^{3-2 k} 
\frac{ \Gamma (2 k+2) \Gamma (2 k+4)}{\Gamma (4 k+10)} \Big)
\end{equation}
and 
\begin{equation} \label{secondold}
\left\langle |\rho|^k \right\rangle_{2-qubit/HS}=108972864000 \frac{ \Gamma (k+1) \Gamma (k+2) \Gamma (k+3) \Gamma
   (k+4)}{\Gamma (4 (k+4))},
\end{equation}
$k=0, 1, 2,\dots$
The bracket notation $\left\langle \right\rangle$ is employed to denote expected value, 
while $\rho$ indicates a generic
(symmetric) two-rebit or generic (Hermitian) two-qubit ($4 \times 4$) density matrix. The expectation is taken with respect to the probability distribution determined by the 
Hilbert-Schmidt/Euclidean/flat metric on either the 9-dimensional space of generic two-rebit or 15-dimensional space of generic two-qubit systems \cite{szHS,ingemarkarol}.

At the outset of our study, we were able to compute {\it seventeen} (thirteen two-rebit and four two-qubit) non-trivial (bivariate) extensions of these two formulas, involving now in addition to $|\rho|$, the quantum-theoretically important determinant $|\rho^{PT}|$. (The nonnegativity 
of $|\rho^{PT}|$--as a corollary of  the celebrated Peres-Horodeccy results \cite{asher,michal}--constitutes a necessary and sufficient condition for separability/disentanglement, when $\rho$ is a $4 \times 4$ density matrix \cite{augusiak,azumaban}.) At this point of our presentation, we note that three of these seventeen extensions are expressible--incorporating as the last factors on their right-hand sides, the two formulas above ((\ref{firstold}), (\ref{secondold}))--as 
\begin{equation} \label{firstnew}
\left\langle |\rho|^k |\rho^{PT}| \right\rangle_{2-rebit/HS}= 
\frac{(k-1) (k (2 k+11)+16)}{32 (k+3) (4 k+11) (4 k+13)} \left\langle |\rho|^k \right\rangle_{2-rebit/HS},
\end{equation}
\begin{equation} \label{secondnew}
\left\langle |\rho|^k |\rho^{PT}|^2 \right\rangle_{2-rebit/HS}= 
\frac{k (k (k (k (4 k (k+12)+203)+368)+709)+2940)+4860}{1024 (k+3) (k+4)
   (4 k+11) (4 k+13) (4 k+15) (4 k+17)} \left\langle |\rho|^k \right\rangle_{2-rebit/HS}
\end{equation}
and
\begin{equation} \label{thirdnew}
\left\langle |\rho|^k |\rho^{PT}| \right\rangle_{2-qubit/HS}= 
\frac{k (k (k+6)-1)-42}{8 (2 k+9) (4 k+17) (4 k+19)} \left\langle |\rho|^k \right\rangle_{2-qubit/HS}.
\end{equation}
These three new formulas were, initially, established by "brute force" computation--that is  calculating the first ($k=0, 1, 2,\ldots,15$ or so) instances of them, then employing
the Mathematica command FindSequenceFunction, and verifying the formulas
generated on still higher values of $k$. 

Let us note here the ranges of the two variables of central interest, $|\rho| \in [0,\frac{1}{256}]$ and 
$|\rho^{PT}| \in [-\frac{1}{16},\frac{1}{256}]$. For various analytical and conventional purposes, it is often convenient to have variables defined over the unit interval [0,1]. If we so (linearly) transform the two determinantal variables, then the rational factors on the right-hand sides of  (\ref{firstnew}) and (\ref{secondnew}) get replaced, respectively, by
\begin{equation}
\frac{8 (k (k (34 k+297)+867)+842)}{17 (k+3) (4 k+11) (4 k+13)}
\end{equation}
and
\begin{equation}
\frac{64 (k (k (k (k (68 k (17 k+348)+200835)+904492)+2279781)+3048904)+1689900)}{289
   (k+3) (k+4) (4 k+11) (4 k+13) (4 k+15) (4 k+17)}.
\end{equation}
\section{The mixed/balanced 
variable $|\rho| |\rho^{PT}|= |\rho \rho^{PT}|$}
As a special case ($k=1$) of formula (\ref{firstnew}), we obtain the rather remarkable moment result, zero, already reported in \cite{HSorthogonal}. The immediate interpretation of this finding is that for the generic two-rebit systems, the two determinants
$|\rho|$ and $|\rho^{PT}|$ comprise a pair of nine-dimensional {\it orthogonal} polynomials \cite{dunkl2,dumitriu,griffithsspano} with respect to Hilbert-Schmidt measure. (C. Dunkl has kindly pointed out that orthogonality here does not imply zero {\it correlation}. The analogous
quantity for generic two-{\it qubit} systems is {\it not} zero, however, but 
$-\frac{1}{4576264}$.) 
In addition to this first ($k=1$) 
HS zero-moment of the  product variable
$|\rho| |\rho^{PT}|$ in the two-rebit case, we had been able to compute its higher-order  moments, $k =2,\ldots,6$. (The result for $k=2$, that is $\frac{7}{5696343244800}$, can be obtained by direct application of formula (\ref{secondnew}).) 
\subsection{Range of Variable}
The feasible range of the (mixed/balanced) variable is 
$|\rho| |\rho^{PT}| \in [-\frac{1}{110592},\frac{1}{256^2}]$--the lower bound of which$-\frac{1}{110592} =- 2^{-12} 3^{-3}$.
This lower bound, determined by analyzing a general convex combination of a Bell state and the fully-mixed state, can be achieved with the entangled two-rebit density matrix
\begin{equation} \label{NewMatrix}
\rho= 
\left(
\begin{array}{cccc}
 \frac{1}{6} & -\frac{1}{6 \sqrt{2}} & \frac{1}{6
   \sqrt{2}} & \frac{1}{12} \left(-1+\sqrt{3}\right) \\
 -\frac{1}{6 \sqrt{2}} & \frac{1}{3} & \frac{1}{12}
   \left(-1-\sqrt{3}\right) & -\frac{1}{6 \sqrt{2}} \\
 \frac{1}{6 \sqrt{2}} & \frac{1}{12}
   \left(-1-\sqrt{3}\right) & \frac{1}{3} & \frac{1}{6
   \sqrt{2}} \\
 \frac{1}{12} \left(-1+\sqrt{3}\right) & -\frac{1}{6
   \sqrt{2}} & \frac{1}{6 \sqrt{2}} & \frac{1}{6}
\end{array}
\right).
\end{equation}
The determinant of $\rho$  here is 
$\frac{1}{576} \left(2 \sqrt{3}-3\right) \approx 0.000805732$
and that of its partial transpose,
$\frac{1}{576} \left(-3-2 \sqrt{3}\right) \approx -0.0112224$ 
(their product being $-\frac{1}{110592} \approx -9.04225 \cdot 10^{-6}$).
Both $\rho$  and $\rho^{PT}$ here have three identical eigenvalues 
($\frac{1}{12} \left(3-\sqrt{3}\right) \approx 0.105662$ for $\rho$ and 
$\frac{1}{12} \left(3+\sqrt{3}\right) \approx 0.394338$ for $\rho^{PT}$).
The isolated eigenvalues for $\rho$  and $\rho^{PT}$ are $\frac{1}{4} \left(1+\sqrt{3}\right) 
\approx 0.683013$, and  $\frac{1}{4} \left(1-\sqrt{3}\right) \approx -0.183013$, respectively. The purity (index of coincidence \cite[p. 56]{ingemarkarol}) of (\ref{NewMatrix}) equals $\frac{1}{2}$, so the participation 
ratio is 2. Its concurrence 
 is $\frac{1}{2} \left(\sqrt{3}-1\right) \approx 0.366025$, while its entanglement of formation is 
\cite[sec. 15.7]{ingemarkarol}
\begin{equation}
E_{complex}[\rho]=\frac{\log (2) \left(\log \left(84+48
   \sqrt{3}\right)-\sqrt{3} \log (3)\right)}{4 \log
   \left(1+\frac{1}{\sqrt{3}}\right) \log   \left(1+\sqrt{3}\right)} \approx 1.21665.
\end{equation}
(Supportively, Dunkl has noted that the computed zeros in his Gaussian quadrature analyses (sec.~\ref{Gaussian} of the two-rebit case fit well into the known ranges of 
$|\rho|$ and $|\rho^{PT}|$.) Alternatively, taking into account the {\it real} nature of the entries of $\rho$, in the sense of the "foil" theory of Caves, Fuchs and Rungta 
\cite{carl}, one 
has a 
concurrence of \cite[eq. (4)]{batleplastino1} $-\frac{1}{\sqrt{3}}$, and an entanglement of formation 
\cite[eq. (2)]{batleplastino1} of
\begin{equation}
E_{real}[\rho]=\frac{\log (2) \left(\log (1728)+2 \sqrt{6} \tanh
   ^{-1}\left(\sqrt{\frac{2}{3}}\right)\right)}{6 \log
   \left(6-2 \sqrt{6}\right) \log \left(2
   \left(3+\sqrt{6}\right)\right)} \approx 6.56825.
\end{equation}
\section{Contour plots of bivariate probability distributions}
For the further edification of the reader, we present in Fig.~\ref{fig:JointRebit}  a numerically-generated contour plot of the joint Hilbert-Schmidt 
(bivariate) probability distribution of $|\rho|$ and $|\rho^{PT}|$ in the two-rebit 
case, and in  Fig.~\ref{fig:JointQubit}, its two-qubit analogue. (A colorized grayscale output is employed, in which larger values appear lighter.) In 
Fig.~\ref{fig:JointDiff} is displayed the {\it difference} obtained by subtracting the second (two-qubit) distribution from the first (two-rebit) 
distribution.
\begin{figure}
\includegraphics{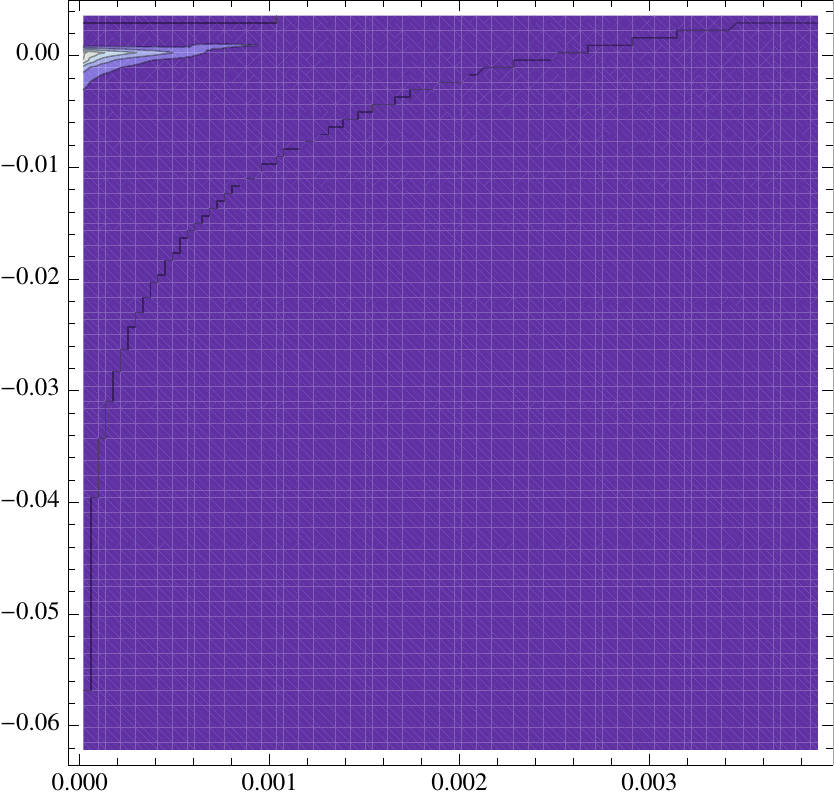}
\caption{\label{fig:JointRebit}Contour plot of the joint Hilbert-Schmidt probability distribution of $|\rho|$ (horizontal axis) and $|\rho^{PT}|$ in the two-rebit case. Larger values appear lighter. The variable ranges are $|\rho| \in [0,\frac{1}{256}]$ and $|\rho^{PT}| \in [-\frac{1}{16},\frac{1}{256}]$. One billion random density matrices were employed.}
\end{figure}
\begin{figure}
\includegraphics{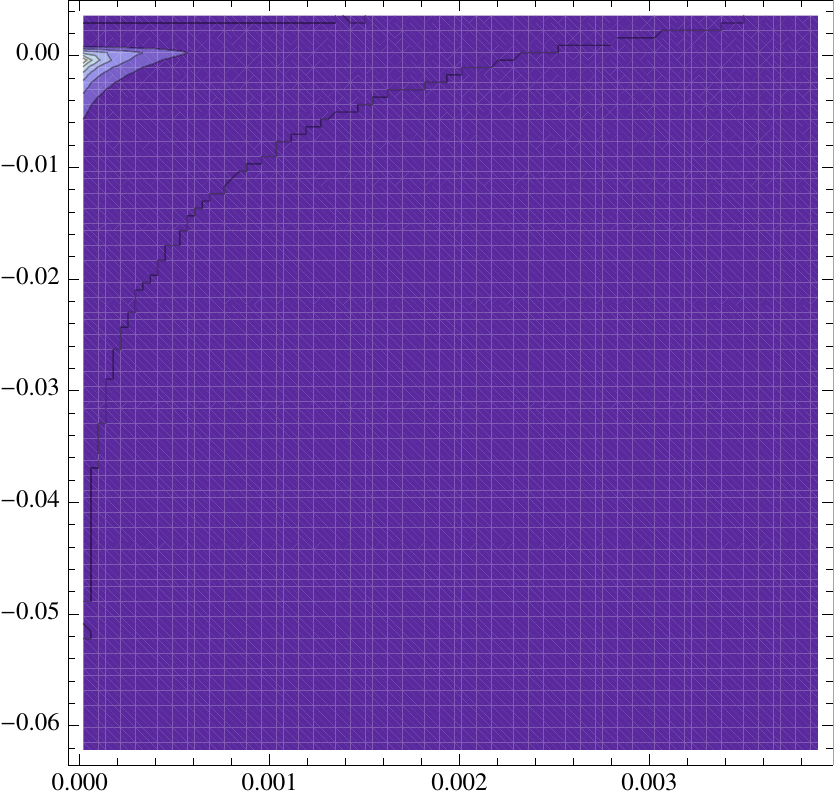}
\caption{\label{fig:JointQubit}Contour plot of the joint Hilbert-Schmidt probability distribution of $|\rho|$ (horizontal axis) and $|\rho^{PT}|$ in the two-qubit case. Six hundred million random density matrices were employed.}
\end{figure}
\begin{figure}
\includegraphics{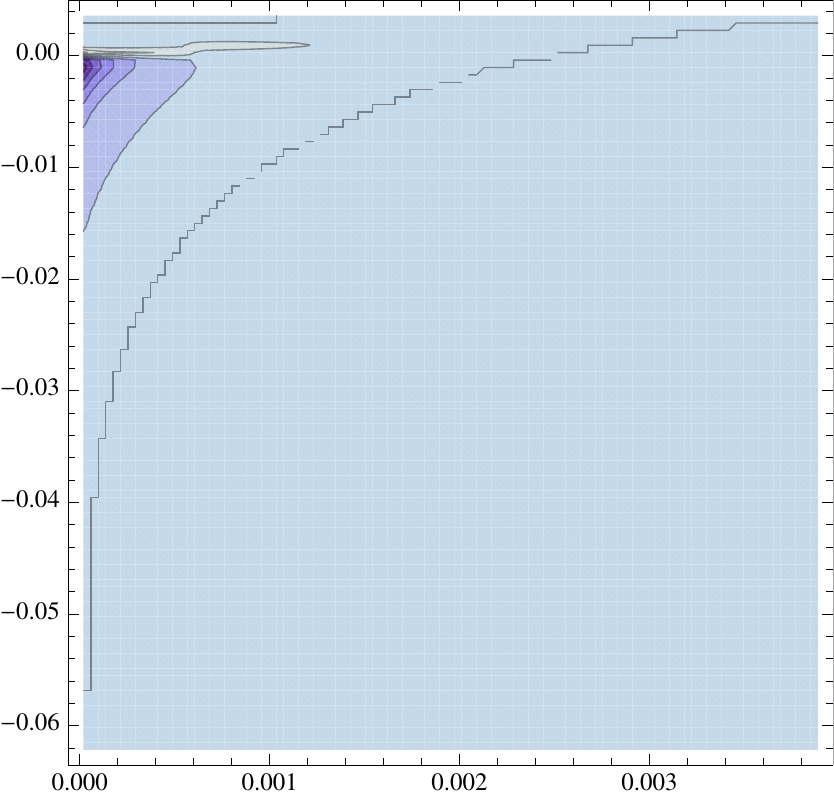}
\caption{\label{fig:JointDiff}Difference obtained by subtracting 
the two-qubit HS probability distribution in Fig.~\ref{fig:JointQubit} from the two-rebit probability distribution in Fig.~\ref{fig:JointRebit}. Darker colors indicate more negative values.}
\end{figure}
(The black curves in all three contour plots appear to be attempts by Mathematica to establish the nonzero-zero probability boundaries--which, it would, of course, be of interest to explicitly determine/parameterize, if possible--of the joint domain of $|\rho|$ and $|\rho^{PT}|$.)

These  last three figures are based on  Hibert-Schmidt sampling (utilizing Ginibre ensembles \cite{csz}) of random density matrices, using $10,000=100^2$ bins. In regard to the two-qubit plot, 
K. {\.Z}yzckowski informally wrote:
"A high peak in the upper corner means that: a) a majority of the entangled states is 'little entangled' (small $det(\rho^T)$) or rather, they are 'close' to the boundary of the set, so
one eigenvalue is close to zero, and the determinant is small; 
b) as $det(\rho)$ is also small, it means that these entangled states
live  close to the boundary of the set of all states (at least one eigenvalue is very small), 
but this is very much consistent with the observation that the center of the convex body of the 2-qubit states is separable  (so entangled states have to live  'close' to the boundary).
Similar reasoning has to hold in the real case as well."

\section{Determinantal product moment formulas}
\subsection{Two-rebit case}
At a still later point in our investigation, we realized that we might make further
progress--despite apparent limitations on the number of determinantal moments we could explicitly compute--by exploiting the evident pattern followed by our newly-found formulas 
(\ref{firstnew}) and (\ref{secondnew})--in particular, the structure in 
their
denominators. This encouragingly proved to be the case, as we were able
to additionally establish that 
\begin{equation} \label{fourthnew}
\left\langle |\rho|^k |\rho^{PT}|^3 \right\rangle_{2-rebit/HS}= 
\frac{A_3}{B_3} \left\langle |\rho|^k \right\rangle_{2-rebit/HS},
\end{equation}
where
\begin{equation}
A_3=8 k^9+180 k^8+1674 k^7+8559 k^6+29493 k^5+84291 k^4+136801 k^3-401334
   k^2-2516616 k-3612816
\end{equation}
and 
\begin{equation} 
B_3=32768 (k+3) (k+4) (k+5) (4 k+11) (4 k+13) (4 k+15) (4 k+17) (4 k+19) (4
   k+21).
\end{equation}

So, it then became rather evident that we can write for general non-negative integer $n$,
\begin{equation} \label{generalRebit}
\left\langle |\rho|^k |\rho^{PT}|^n \right\rangle_{2-rebit/HS}= 
\frac{A_n}{B_n} \left\langle |\rho|^k \right\rangle_{2-rebit/HS},
\end{equation}
where both the numerator $A_n$ and the denominator $B_n$ are $3 n$-degree polynomials (thus, forming a "biproper rational function" \cite{chou}) in $k$ (the leading coefficient of $A_n$ being $2^n$), and
\begin{equation} \label{denominator}
B_n=128^{n } (k+3)_{n } \left(2 k+\frac{11}{2}\right)_{2 n },
\end{equation}
where the Pochhammer symbol $(x)_n \equiv \frac{\Gamma(x+n)}{\Gamma(x)} = x (x+1)\ldots(x+n-1)$ is employed.
Further still, moving upward to the next level ($n=4$), we determined that
\begin{equation} \label{fifthnew}
\left\langle |\rho|^k |\rho^{PT}|^4 \right\rangle_{2-rebit/HS}= 
\frac{A_4}{B_4} \left\langle |\rho|^k \right\rangle_{2-rebit/HS},
\end{equation}
where
\begin{equation}
A_4= 16 k^{12}+576 k^{11}+9112 k^{10}+84496 k^9+525681 k^8+2389416 k^7+7805462
   k^6+13904508 k^5+
\end{equation}
\begin{displaymath}
+6212189 k^4+166748972 k^3+1636873812 k^2+5496485760
   k+6610161600,
\end{displaymath}
and $B_4$ is given by (\ref{denominator})  with $n=4$. 
The real part of one of the roots of  $A_4$ is 2.999905, suggesting to us some possible interesting asymptotic behavior of the roots of these numerators, $n \rightarrow \infty$. In a related 
predecessor study \cite{HSorthogonal}[sec. II.B.2], we had been able to discern the general structure that the {\it denominators} of  certain "intermediate [rational] functions" used in computing the (univariate) moments of $\left\langle \rho^{PT}|^n \right\rangle_{2-rebit/HS}$, $n=1,\ldots,9$ followed.

From our four new two-rebit determinantal moment results (\ref{firstnew}), (\ref{secondnew}), (\ref{fourthnew}) and (\ref{fifthnew}),  we see that the {\it constant} terms in the $3 n$-degree numerator $A_n$
are $-16, 4860, -3612816$ and $6610161600$ for $n=1, 2, 3, 4$. Since we had previously computed \cite{HSorthogonal}[eqs, (33)-(41)]
the moments of $\left\langle |\rho^{PT}|^n \right\rangle_{2-rebit/HS}$, 
$n=1,\ldots,9$, we were also immediately able to determine the next five members
of this sequence $\{-16, 4860, -3612816, 6610161600 \}$. However, no general rule for this sequence, which would, interestingly, directly allow us to obtain a formula for $\left\langle |\rho^{PT}|^n \right\rangle_{2-rebit/HS}$, had yet emerged for them.

Certainly, it would be of interest to conduct analyses parallel to those reported above for metrics of quantum-information-theoretic interest other than the Hilbert-Schmidt, such as the Bures (minimal monotone) metric 
\cite{szBures,ingemarkarol,andai}. The computational challenges involved, however, might, at least in certain respects, be even more substantial.
\subsection{Use of Cholesky decomposition in rigorously finding 
formulas for general $k$}
After having posted the results above, along with additional ones, as a preprint \cite{slaterJoint}, Charles Dunkl detailed a computational proposal that 
he had outlined to us somewhat earlier. The particularly attractive feature of this proposal was that it would--holding the exponent $n$ of 
$|\rho^{PT}|$ fixed--be able to compute the adjustment factors for {\it general} $k$, rather than having to do so for sufficient numbers of {\it individual} members of the sequence $k=1,\ldots, N$, so that we could successfully apply the Mathematica command FindSequenceFunction, as had been our strategy heretofore. The proposal of Dunkl (Appendix~\ref{appDunkl}) involved parameterizing $4 \times 4$ density matrices in terms of their Cholesky decompositions. The parameters (ten in number for the two-rebit case and sixteen for the two-qubit case) would be viewed as points on the surface  of a  unit (due to the trace requirement)  
10-sphere or 16-sphere. The squares of the points lie in a simplex.
One can then employ the corresponding Dirichlet probability distributions over the simplices to determine the associated expected values (joint moments).
(A further highly facilitating aspect here is that both $|\rho|$ and the jacobian for the transformation to Cholesky variables are simply {\it monomials} in the variables.)
Using this approach, we were able to extend our single ($n=1, \alpha=1$) two-{\it qubit} result (\ref{thirdnew}) to the $n=2$ case,
\begin{equation} \label{secondnewqubit}
\left\langle |\rho|^k |\rho^{PT}|^2  \right\rangle_{2-qubit/HS}= 
\end{equation}
\begin{displaymath}
\frac{k (k (k (k (k (k+15)+67)+45)+220)+4260)+10944}{64 (2 k+9) (2 k+11)
   (4 k+17) (4 k+19) (4 k+21) (4 k+23)} \left\langle |\rho|^k \right\rangle_{2-qubit/HS}.
\end{displaymath}

Additionally, in the following array, 
\begin{equation} \label{CoefficientArray1}
\left(
\begin{array}{cccccc}
 -16 & 4860 & -3612816 & 6610161600 & -23680812672000 &
   147885533254368000 \\
 5 & 2940 & -2516616 & 5496485760 & -21644930613600 & 144374531813568000
   \\
 9 & 709 & -401334 & 1636873812 & -7755993054000 & 58524043784903280 \\
 2 & 368 & 136801 & 166748972 & -1199508017652 & 11977854861441312 \\
 - & 203 & 84291 & 6212189 & -4378482660 & 1052189083196640 \\
 - & 48 & 29493 & 13904508 & 29246867605 & -30302414250528 \\
 - & 4 & 8559 & 7805462 & 7876634465 & -6899036908859 \\
 - & - & 1674 & 2389416 & 2649513956 & 3583820785224 \\
 - & - & 180 & 525681 & 883461210 & 1632448582425 \\
 - & - & 8 & 84496 & 219916945 & 477741210624 \\
 - & - & - & 9112 & 40679505 & 118164517947 \\
 - & - & - & 576 & 5660714 & 23817008856 \\
 - & - & - & 16 & 575800 & 3786901675 \\
 - & - & - & - & 40000 & 469728096 \\
 - & - & - & - & 1680 & 44685468 \\
 - & - & - & - & 32 & 3143808 \\
 - & - & - & - & - & 153360 \\
 - & - & - & - & - & 4608 \\
 - & - & - & - & - & 64
\end{array}
\right)
\end{equation}
we show ($n=1,\ldots,6$), column-by-column, the 
$(3 n +1)$ coefficients of the numerator polynomials in ascending order--the entries in the first row corresponding to the constant terms,\ldots--in the two-rebit case. 

Additional results for the cases $n=7,\ldots,13$ were found \cite[eqs. (17)-(21)]{slaterJoint}.
The leading (highest-order) coefficients in these thirteen sets of two-rebit results were found to be expressible in descending order as 
\begin{equation} \label{specfunct1}
C_{3 n+1}=2^n; \hspace{.1in} C_{3 n}=3 \times 2^{n-1} n (n+2);\hspace{.1in} C_{3 n-1}=2^{n -3} n  (n  (n  (9 n +32)+24)-45);
\end{equation}
\begin{equation}
 C_{3 n-2}=
2^{n -4} n  \left(n  \left(n  \left(n  \left(9 n ^2+42
   n +52\right)-119\right)-52\right)-60\right).
\end{equation}

From these four formulas, we are able to reconstruct ($n=1$) all four entries in the first column of the table (\ref{CoefficientArray1}). Thus, it appears that, in general, $C_{3 n-i}$ is a polynomial in $n$ of degree
$2 (i+1)$. (For $i=3 n -1$, we obtain the constant term, of strong interest. With the full knowledge of all the constant terms, and none of the other 
coefficients, we could obtain the univariate moments 
$\left\langle |\rho^{PT}|^n \right\rangle_{2-rebit/HS}$.)
Further, we have found that 
\begin{equation}
 C_{3 n-3}= 
\end{equation}
\begin{displaymath}
\frac{1}{5} 2^{n -7} (n -1) \left(135 n ^7+855 n
   ^6+1895 n ^5-1771 n ^4-3091 n ^3-7731 n ^2+32394
   n \right),
\end{displaymath}
and 
\begin{equation} \label{specfunct2}
 C_{3 n-4}= \frac{1}{5} 2^{n -8} (n -1) n
\end{equation}
\begin{displaymath} 
n  (n  (n  (n  (n  (3 n  (3 n  (9 n
   +59)+377)-2887)-2295)-10535)+112240)-181492)+436720.
\end{displaymath}
\subsection{Two-qubit formulas}
The numerators of our four sets ($n =1, 2, 3, 4$) of two-qubit results (the first two having been obtained by "brute force" Mathematica computations, and the last two, using the Cholesky-decomposition parameterization) are expressible, in similar fashion, as 
\begin{equation} 
\left(
\begin{array}{cccc}
 -42 & 10944 & -6929280 & 9247219200 \\
 -1 & 4260 & -3684384 & 6039653760 \\
 6 & 220 & -456948 & 1342859616 \\
 1 & 45 & 80168 & 64072440 \\
 - & 67 & 27783 & -13235252 \\
 - & 15 & 5373 & 1080858 \\
 - & 1 & 1458 & 1160375 \\
 - & - & 282 & 278478 \\
 - & - & 27 & 50991 \\
 - & - & 1 & 7542 \\
 - & - & - & 749 \\
 - & - & - & 42 \\
 - & - & - & 1
\end{array}
\right) .
\end{equation}
We observe that the leading coefficients $C_{3 n +1}$ of all four numerators are 1, so they are {\it monic} in character,  
while the next-to-leading coefficients  fit the pattern 
$C_{3 n}  = 3 n (n+3)/2$.

It is evident at this point, in striking analogy to the general two-rebit formula (\ref{generalRebit}), that in the two-qubit scenario,
\begin{equation} \label{denominator2}
\left\langle |\rho|^k |\rho^{PT}|^n \right\rangle_{2-qubit/HS}= 
\frac{\hat{A}_n}{\hat{B}_n} \left\langle |\rho|^k \right\rangle_{2-qubit/HS},
\end{equation}
where, again, both the numerator $\hat{A}_n$ and the denominator $\hat{B}_n$ are $3 n$-degree polynomials in $k$, and 
(cf. (\ref{denominator}))
\begin{equation}
\hat{B}_n= 2^{6 n } \left(k+\frac{9}{2}\right)_{n } \left(2
   k+\frac{17}{2}\right)_{2 n }.
\end{equation}

\section{Determinantal Product  Moment formulas for $6 \times 6$ density matrices} \label{QubitQutrit}
Of course, one may also consider issues analogous to those discussed above
for bipartite quantum systems of higher dimensionality. To begin such a course of analysis, we have found 
for the generic real $6 \times 6$ ("rebit-retrit") density matrices (occupying a 20-dimensional space) the result
\begin{equation} \label{rebitretrit}
\left\langle |\rho|^k |\rho^{PT}| \right\rangle_{rebit-retrit/HS}= 
\frac{4 k^5+40 k^4+95 k^3-220 k^2-1149 k-1170}{576 (k+4) (3 k+11) (3
   k+13) (6 k+23) (6 k+25)}
 \left\langle |\rho|^k \right\rangle_{rebit-retrit/HS}.
\end{equation}
Increasing the  exponential parameter $n$ from 1 to 2, we obtained that the rational function adjustment factor for $\left\langle |\rho|^k |\rho^{PT}|^2 \right\rangle_{rebit-retrit/HS}$ is the ratio of
\begin{equation}
16 k^9+336 k^8+2616 k^7+8496 k^6+12069 k^5+101979 k^4+903539 k^3+3316809
   k^2+5620320 k+3715740
\end{equation}
to another ninth-degree polynomial 
\begin{equation}
331776 (k+5) (3 k+11) (3 k+13) (3 k+14) (3 k+16) (6 k+23) (6 k+25) (6
   k+29) (6 k+31).
\end{equation}

Additionally, for the generic complex $6 \times 6$ (qubit-qutrit) density matrices (occupying a 35-dimensional space), we have obtained the result
\begin{equation} \label{qubitqutrit}
\left\langle |\rho|^k |\rho^{PT}| \right\rangle_{qubit-qutrit/HS}= 
\frac{k^5+15 k^4+37 k^3-423 k^2-2558 k-3840}{72 (2 k+13) (3 k+19) (3
   k+20) (6 k+37) (6 k+41)}
 \left\langle |\rho|^k \right\rangle_{qubit-qutrit/HS}.
\end{equation}

It should be pointed out, however, that in contrast to the $4 \times 4$ 
density matrix case, the nonnegativity of the determinant of the corresponding partial transpose of a $6 \times 6$ density matrix does not guarantee separability, since possibly two eigenvalues of the partial transpose could be negative, indicative of entanglement, while still yielding a nonnegative determinant (cf. \cite{augusiak}).
\section{Minimally degenerate two-rebit density matrices} \label{minimally}
For the {\it eight}-dimensional manifold composed of generic minimally degenerate two-rebit systems (corresponding to density matrices $\rho$ with at least one eigenvalue zero), forming the boundary of the nine-dimensional manifold of generic two-rebit systems, we have computed the Hilbert-Schmidt moments of $|\rho^{PT}|^n, n=1,\ldots,10$. (For such systems, 
$|\rho^{PT}| \in [-\frac{1}{16},\frac{1}{432}]$.) These results are given in Appendix~\ref{AREA}. (Charles Dunkl was able to find rational functions of $k$ for $n=1, 2, 3$--but not yet further--which yielded these 
moments when $k$ was set to zero.)

We note that as a particular case of results of Szarek, Bengtsson  and {\.Z}yczkowski \cite{sbz}, the Hilbert-Schmidt probability that a generic two-rebit system is separable is {\it twice} the HS probability that a generic minimally degenerate two-rebit system is separable. 

\section{Estimation of separability probabilities, using conjectured formulas} \label{PROVOSTreconstruction}
\subsection{Two-rebit case ($\alpha=\frac{1}{2}$)} \label{TWOREBITS}
We now utilize the conjectured formulas 
(App.~\ref{ComplexConjectures})--developed by Dunkl at an intermediate stage in our research effort--with the Dyson-index-type parameter $\alpha$ set to 
$\frac{1}{2}$, corresponding to the two-rebit case. In Fig.~\ref{fig:ListPlotTwoMoments}, we display the 
corresponding Hilbert-Schmidt separability probability estimates obtained by application of the Legendre-polynomial-based probability density reconstruction (Mathematica) procedure of Provost \cite[eq. (15)]{Provost}--yielding least-squares approximating polynomials--to the  sequence of the first 3,310 moments of $(|\rho| |\rho^{PT}|)^n$ (upper blue curve) and to the sequence of the first 3,310 moments
of $|\rho^{PT}|^n$ (lower red curve). (All our computations here and below were conducted with 48-digit accuracy. A uniform "baseline density" was, in effect, assumed, while the use in this capacity of a beta distribution, fitted to the first two moments, and Jacobi polynomials yielded highly erratic estimates when the corresponding Mathematica algorithm of Provost \cite[pp. 750-752]{Provost} was applied.) 
\begin{figure}
\includegraphics{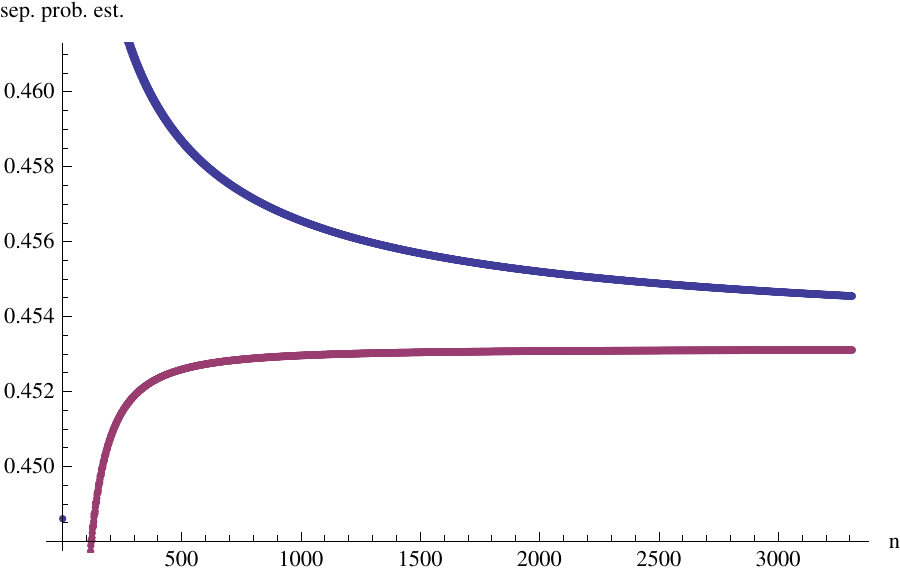}
\caption{\label{fig:ListPlotTwoMoments}Two sets of estimates of the Hilbert-Schmidt two-rebit separability probability. The upper (blue) decreasing curve is based on the first 3,310 (nonnegative) moments of $(|\rho| |\rho^{PT}|)^n$ and the lower 
(red) increasing curve on the first 3,310 (alternating in sign) 
moments of $|\rho^{PT}|^n$. The true separability probability, thus, appears constrained to lie within the range [0.453104500, 0.454543513].}
\end{figure}

In Fig.~\ref{fig:ListPlotTwoMoments}, the last/highest pair of estimates is 
$\{0.453104500, 0.454543513\}$,  so it certainly  appears that the true (common) separability probability for the two variables must lie within this interval. The convergence properties of the two sequences of estimates 
display parallel (increasing-decreasing) behavior in the two-qubit case.  (In sec.~\ref{Gaussian}, Dunkl develops a distinct/alternative probability distribution reconstruction 
approach of interest--which he applies to 
considerably fewer moments than the 3,310 we do--to the two-rebit separability probability estimation problem.)

Our 2007 hypothesis (\cite[sec. X.A]{slater833}) that the Hilbert-Schmidt separability probability of generic two-rebit systems is $\frac{8}{17} \approx 0.470588$ can, thus, be decisively rejected (Fig.~\ref{fig:ListPlotTwoMoments}), since it clearly lies outside the confining interval.  We will here note that in the later 2010 study \cite{advances}[p. 7],  a  numerical estimate of 0.4528427, substantially different from $\frac{8}{17}$, was reported, and it was additionally observed that in 
\cite[sec. V.A.2]{slater833} the best numerical estimate of the two-rebit separability probability obtained there had been 0.4538838. A possible exact value of $\frac{29}{64} = 0.453125$--which does lie within the confining interval in Fig.~\ref{fig:ListPlotTwoMoments}--was, in fact,  suggested in \cite[p. 6]{advances}. Use of linear algebraic principles, did allow us in \cite{advances} to establish an {\it upper} bound
on the generic two-rebit Hilbert-Schmidt separability probability of 
$\frac{1129}{2100} \approx 0.537619$.

We note, importantly, that the lower bound of the confining interval, 0.4531014500 is 0.999955 times as large as $\frac{29}{64}$.
\subsection{Two-qubit case ($\alpha=1$)} \label{TWOQUBITS}
In Fig.~\ref{fig:ListPlotComplexTwoMoments} we similarly show--for the two-qubit 
case ($\alpha=1$)--the estimates obtained by application of the 
probability distribution reconstruction procedure of Provost \cite[eq. (15)]{Provost}
to sequences of 2,415 moments of $(|\rho| |\rho^{PT}|)^n$ (upper blue curve) and $|\rho^{PT}|^n$ (lower red curve).
We, of course, note that the lower bound obtained of 0.2424235313 seems to nicely support our 2007 hypothesis (\cite[sec. X.B]{slater833}) that the Hilbert-Schmidt separability probability of generic two-qubit systems is $\frac{8}{33} \approx 0.242424É$. (The ratio of this lower bound to that based on 2,414 moments is 1.000000006779, indicative of strong convergence. The analogous ratio for the upper estimate was 0.99999153401--somewhat less strong.)
\begin{figure}
\includegraphics{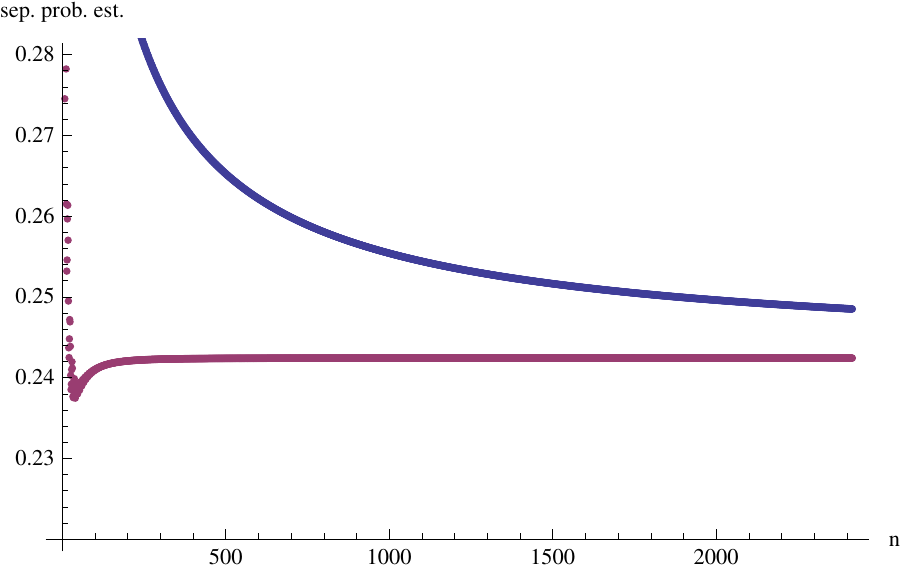}
\caption{\label{fig:ListPlotComplexTwoMoments}Two sets of estimates of the Hilbert-Schmidt two-qubit separability probability based on 2,415 moments. The upper (blue) decreasing curve is based on the (nearly all nonnegative) moments of $(|\rho| |\rho^{PT}|)^n$ and the lower (red) increasing curve on the (alternating in sign) 
moments of $|\rho^{PT}|^n$. The true separability probability, thus,  appears to lie within the confining range [0.2424235313, 0.2485026468].)}
\end{figure}

{\. Z}yczkowski, Horodecki,
Sanpera and Lewenstein, in their foundational paper \cite[eq. (36)]{ZHSL},  provided a numerical estimate--$0.632 \pm 0.002$--of the generic two-qubit separability probability, using  as a measure the product of the uniform distribution on the 3-simplex of eigenvalues and the Haar measure on the 15-dimensional $4 \times 4$ unitary matrices. (The  $4 \times 4$ density matrices were, then, in a sense, over-parameterized. The authors were "surprised" that the probability exceeded $50\%$.) They also advanced \cite[eq. (35)]{ZHSL} certain analytical 
arguments that the probability was in the interval [0.302, 0.863]. While these studies are of great conceptual interest, they did not specifically employ as measures those defined by the volume elements of metrics of interest (such as the Hilbert-Schmidt, Bures,\ldots) over the quantum states.

\subsection{Reconstructed probability distributions}
In Figs.~\ref{fig:MnatPDFRebits} and \ref{fig:MnatPDFQubits}, we show (based on 200 moments, using now the procedure of Mnatsakanov 
\cite{mnatsakanov2}), rather than that of Provost \cite{Provost}, the reconstructed HS two-rebit and two-qubit probability distributions for both sets of moments, all distributions linearly transformed to the interval [0,1].
\begin{figure}
\includegraphics{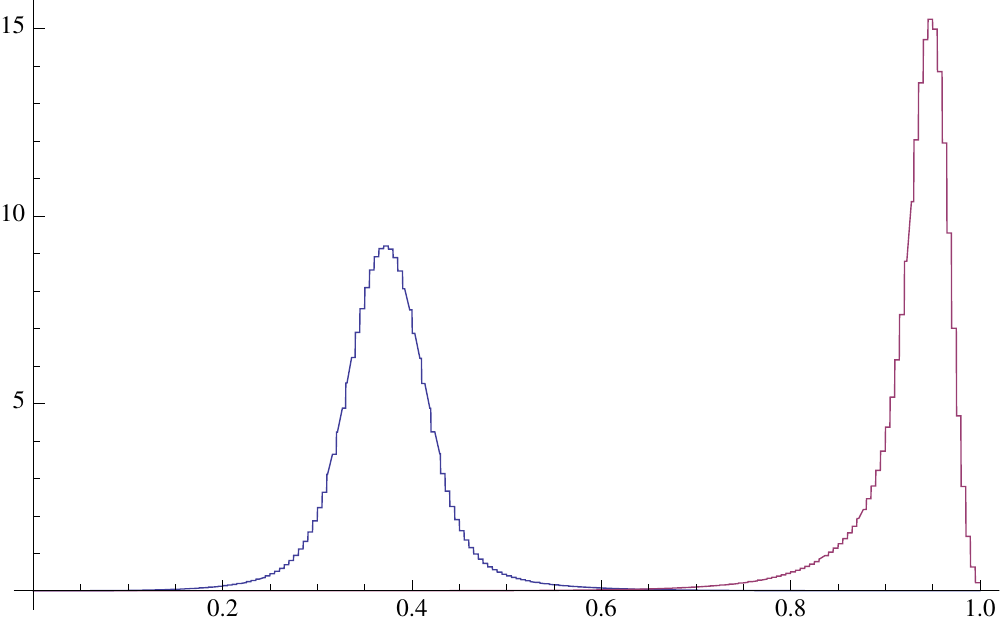}
\caption{\label{fig:MnatPDFRebits}Reconstructed--and linearly transformed to [0,1]--HS two-rebit probability distributions based on 200 moments of 
$|\rho| |\rho^{PT}|$ (blue, lower-peaked curve) and $|\rho^{PT}|$ (red, higher-peaked curve)}
\end{figure}
\begin{figure}
\includegraphics{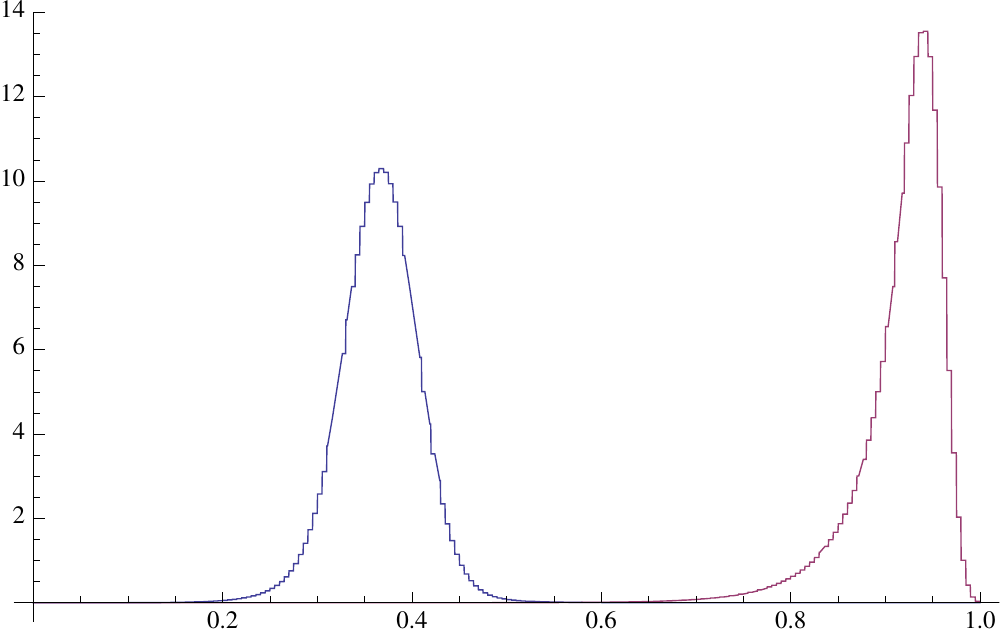}
\caption{\label{fig:MnatPDFQubits}Reconstructed--and linearly transformed to [0,1]--HS two-qubit probability distributions based on 200 moments of 
$|\rho| |\rho^{PT}|$ (blue, lower-peaked curve) and $|\rho^{PT}|$ (red, higher-peaked curve)}
\end{figure}
\subsection{$\alpha$ as a free parameter}
As an exercise of interest, let us consider the Dyson-index-like parameter 
$\alpha$ in sec.~\ref{ComplexConjectures}, with the values 
$\frac{1}{2}$ and 1 conjecturally corresponding to the two-rebit and two-qubit moments, respectively, as a {\it free/continuous} parameter (cf. \cite{MatrixModels}), and perform our standard separability probability calculations using the Provost algorithm 
\cite{Provost}--taking the same ranges as before for the 
determinantal moment variables. Based on ninety-six moments, we obtain 
Fig.~\ref{fig:FreeParameter}.
\begin{figure}
\includegraphics{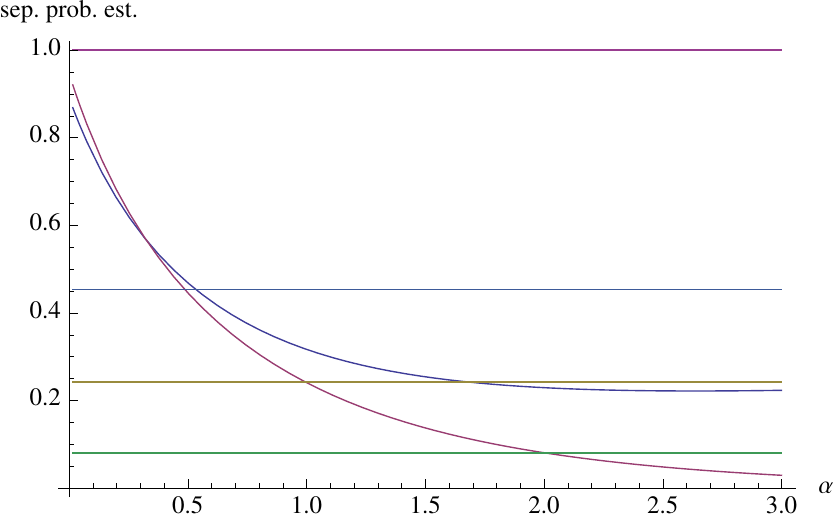}
\caption{\label{fig:FreeParameter}Separability probability estimates as a function of the parameter $\alpha$ (sec.~\ref{ComplexConjectures}). The upper curve is based on ninety-six moments of $|\rho| |\rho^{PT}|$, and the lower curve on ninety-six moments of $|\rho^{PT}|$. Also included as horizontal lines are the two-rebit ($\alpha=\frac{1}{2}$), two-qubit 
($\alpha=1$) and two-"quaterbit" ($\alpha=2$) and "classical" ($\alpha=0$) conjectures of  $\frac{29}{64} = 0.453125$,  $\frac{8}{33} \approx 0.242424$, $\frac{26}{323} \approx 0.080495$ and 1, respectively.}
\end{figure}
\subsection{$\alpha=2$ (quaternionic?)} \label{TWOQUATBITS}
In Fig.~\ref{fig:ListPlotQuatTwoMoments} we show--for the $\alpha =2$
(presumptively quaternionic) case (Appendix~\ref{ComplexConjectures})--the estimates obtained by application of the procedure of Provost \cite[eq. (15)]{Provost}
to the sequences of moments of $(|\rho| |\rho^{PT}|)^n$ (upper blue curve)
and $|\rho^{PT}|^n$ (lower red curve). (We use the term "presumptively", precisely because we have performed no explicit calculations--as we certainly have done in the two-rebit ($\alpha=\frac{1}{2}$) and two-qubit cases ($\alpha=1$)--involving
$4 \times 4$ {\it quaternionic} density matrices. We are, thus, proceeding under the assumption that we can extrapolate the formula of Dunkl to the case $\alpha=2$. Dunkl, however, has noted that his formula does agree with
that of Andai\cite[Thm. 4]{andai}, in the quaternionic case,  for the [univariate] moments of $|\rho|$ (cf. \cite{quartic}). Also, Dunkl has raised the issue of whether or not nonnegativity of the determinant of the partial transpose is equivalent to separability, as it is known to be in the two-rebit and two-qubit cases \cite{augusiak}.)
\begin{figure}
\includegraphics{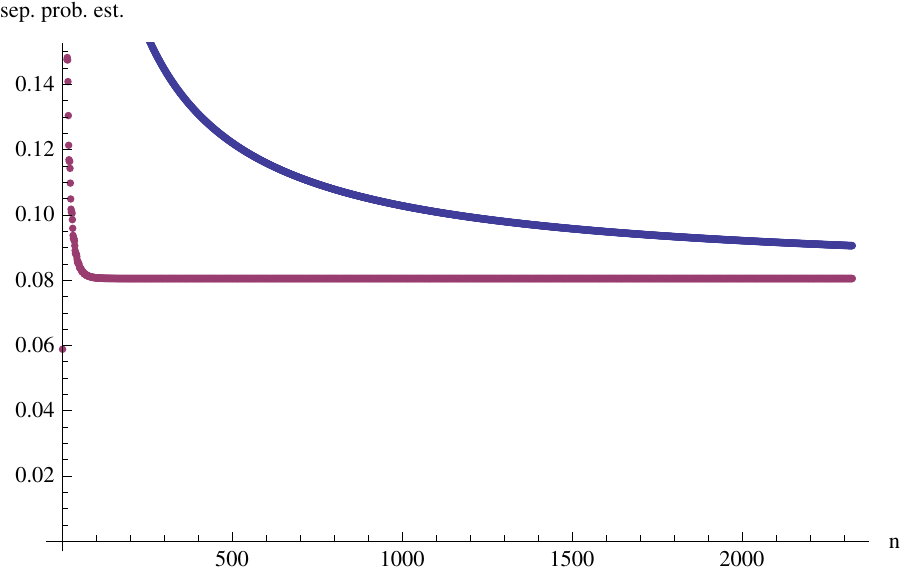}
\caption{\label{fig:ListPlotQuatTwoMoments}Two sets of estimates of the 
(quaternionic?) Hilbert-Schmidt separability probability. The upper (blue) decreasing curve is based on Dunkl's conjectured  
formulas--using $\alpha=2$--for the expected values of  $(|\rho| |\rho^{PT}|)^n$ and the lower (red) curve, similarly for   $|\rho^{PT}|^n$. 2,325 moments were employed.}
\end{figure}
The lower estimate based on 2,325 moments is 0.080495355 (which is 1.000000000049 times the corresponding estimate based on 2,324 moments). This 2,325-moment estimate can be truly remarkably well-fitted by 
the relatively simple fraction $\frac{26}{323} \approx 0.0804953560$.

In the framework of \cite{slater833}[sec. IX], the "scaling factor" used to  obtain the 
$\frac{26}{323}$ result would be $\frac{19136 \pi ^{12}}{152809335}$, 
where $ 19136 = 2^6 \times 13 \times 23$ and $152809335 =3^6 \times 5 \times 7 \times 53 \times 113$. (In these calculations, we took the  total HS quaternionic volume to be equal to the product of that volume given by Andai in \cite{andai} and the 
normalization factor of $2^{13}$ indicated there--thus, giving us the HS volume in the 
{\.Z}yczkowski-Sommmers framework \cite{szHS} that we have employed throughout.) For our two other conjectures, the associated scaling factors would be ($\alpha=\frac{1}{2}$, two-rebit) $\frac{145 \pi ^4}{128}$ and 
($\alpha=1$, two-qubit) $\frac{256 \pi ^6}{639}$. The associated HS 
{\it separable} volumes would, then,  be $ \frac{29 \pi ^4}{3870720}$, $\frac{2 \pi ^6}{7023641625}$, and $\frac{\pi ^{12}}{477802357101050231250}$, for the real, complex and quaternionic cases, respectively.
\subsection{$\alpha=4$ (octonionic?)} \label{TWOOCTBITS}
In Fig.~\ref{fig:ListPlotOctTwoMoments} we show--for the $\alpha =4$
(octonionic? (cf. \cite{adler,baez,dorje})) case--the estimates obtained by application of the procedure of Provost \cite[eq. (15)]{Provost}
to sequences of 2,125 moments of $(|\rho| |\rho^{PT}|)^n$ (upper blue curve)
and $|\rho^{PT}|^n$ (lower red curve). The fraction $\frac{760}{69903}  = \frac{2^3 \cdot 5 \cdot 19}{3^4 \cdot 863} \approx 0.0108722086$ is 0.9999999981  times as large as the estimated separability probability. 
Convergence is comparatively very strong in this instance, and definitely seems to improve, in general, as the Dyson-index-like parameter $\alpha$ increases.
\begin{figure}
\includegraphics{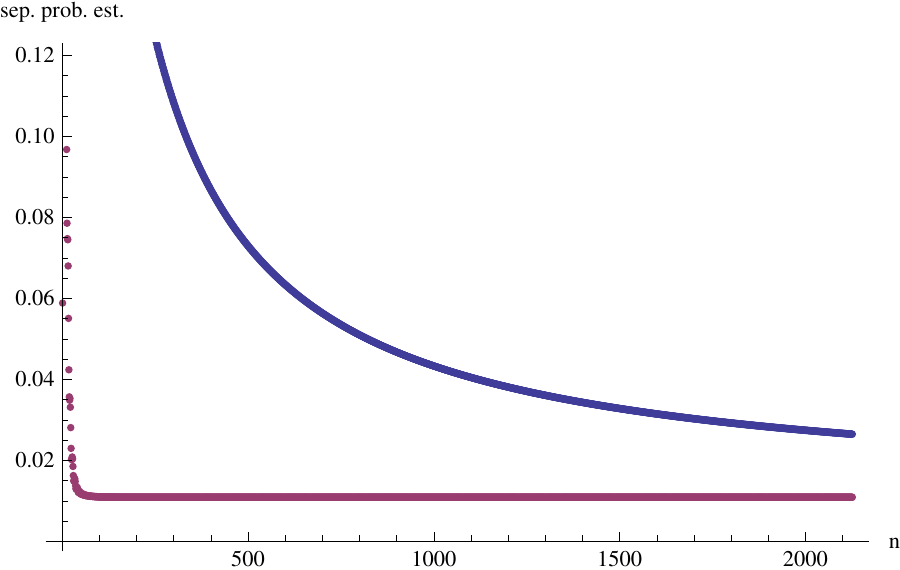}
\caption{\label{fig:ListPlotOctTwoMoments}Two sets of estimates of the 
(octonionic?) Hilbert-Schmidt separability probability. The upper (blue) decreasing curve is based on Dunkl's conjectured  
formulas--using $\alpha=4$--for the expected values of  $(|\rho| |\rho^{PT}|)^n$ and the lower (red) curve, similarly for   $|\rho^{PT}|^n$. The true value appears to be constrained to lie within [0.0108722086, 0.0264396063]. 2,125 moments were employed.}
\end{figure}
\subsection{$\alpha=0$ (classical?)} \label{TWONULLBITS}
If we set $\alpha=0$ in (\ref{nequalk}) for the (mixed-moments) 
case $n=k$, we obtain the simplification
\begin{equation} \label{nequalkSimple}
\left\langle \left\vert \rho\right\vert ^{n}\left\vert \rho^{PT}\right\vert
^{n}\right\rangle =
\frac{4096^{-n} \Gamma (2 n+1)^3}{\left(\frac{3}{2}\right)_{2 n}
   \left(\frac{5}{2}\right)_{4 n}}.
\end{equation}
In Fig.~\ref{fig:ListPlotOctNULLMoments}, we plot our standard pair of two estimates (although now the roles of upper and lower curves are reversed).
It appears that there is convergence to 1, that is, $\alpha=0$ corresponds, in some sense, to a {\it classical} scenario, in which no entanglement is present.
\begin{figure}
\includegraphics{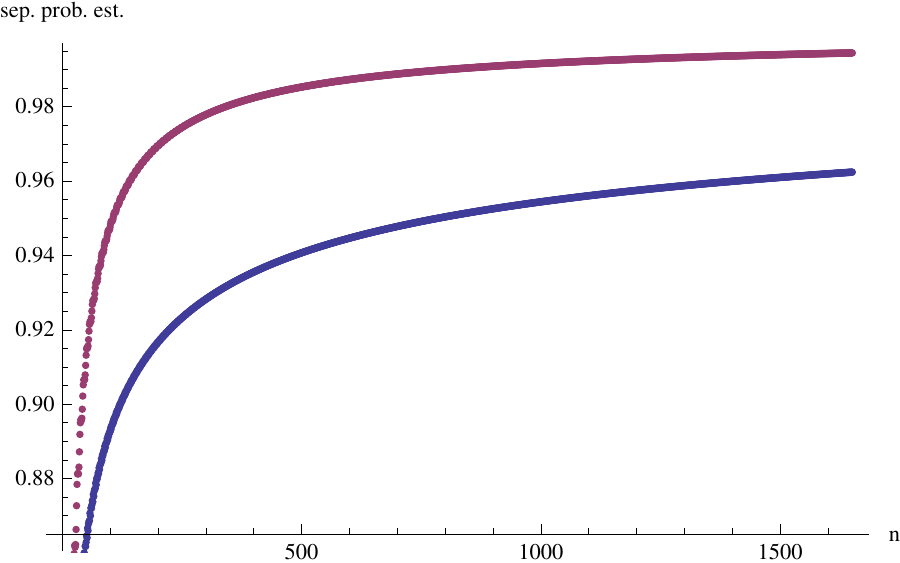}
\caption{\label{fig:ListPlotOctNULLMoments}Two sets of estimates of the 
(classical?)  Hilbert-Schmidt separability probability, using $\alpha=0$. The upper (red) decreasing curve is based on Dunkl's conjectured  
formula (\ref{nequalzero}) for the expected values of  
$|\rho^{PT}|^n$--and the lower (blue) curve, similarly for $(|\rho| 
|\rho^{PT}|)^n$--given by the simplified formula (\ref{nequalkSimple}). 1,650 moments were employed, with the last pair of estimates being $\{0.96238936,0.99445741\}$.}
\end{figure}
In regard to setting  $\alpha=0$, Dunkl commented that doing so "assigns measure zero to the off-diagonal entries of the Cholesky factor. The determinant and PT-determinant are identical as far as the measure is concerned, and the probability distribution is the same as that of the product $t_1 t_2 t_3 t_4$ on the simplex in 3-space ($t_1,t_2,t_3 \geq 0, t_4=1-t_1-t_2-t_3$ and $t_4 \geq 0$)."  His "attempt to reconstruct the underlying probability distribution yields an inelegant integral of a hypergeometric series".
\subsection{Other values of $\alpha$}
We also have conducted Legendre-polynomial reconstruction  analyses for a number of other values of $\alpha$, which we summarize in the form
(cf. Fig.~\ref{fig:FreeParameter})
\begin{equation}
\left(
\begin{array}{ccc}
 \frac{1}{4} & 850 & \{0.64744667, 0.63955009\} \\
 \frac{3}{4} & 525 & \{0.34299437,0.32784144\} \\
 \frac{3}{2} & 1600 & \{0.13756171,0.14950325\} \\
 3 & 1075 & \{0.029008076,0.055230359\} \\
 8 & 850 & \{0.00025439139228,0.055713576\}
\end{array}
\right) .
\end{equation}
The first two columns give the value of $\alpha$ and the number of moments employed, and the last, the confining interval for the associated separability probabilities, with the first value being based on the moments of $|\rho^{PT}|$ and the second, on the moments of 
$|\rho| |\rho^{PT}|$. Convergence of the probability-distribution reconstruction algorithm, based on the moments of $|\rho^{PT}|$, appears to greatly increase as $\alpha$ increases. (An extremely close fractional fit to the lower bound for $\alpha=8$ is $\frac{81}{318407} \approx 
0.00025439139215$.)
\subsection{Specialized lower-dimensional ("non-generic") cases} \label{specialized}
In \cite[sec. II.A]{slater833}, we considered classes of $4 \times 4$
real, complex and quaternionic density matrices, where--as usual--the diagonal entries were allowed to take values in the 3-simplex, but now five of the six pairs of  off-diagonal entries were nullified, leaving only the (2,3) and (3,2)-pair as free.
(The associated separability probabilities were found to be $\frac{3 \pi}{16},\frac{1}{3}$ and $\frac{1}{10}$.) Dunkl (App.~\ref{fourparameter}) has now been able to {\it prove} formulas for the bivariate moments in these specialized scenarios.
\section{Hilbert-Schmidt and Bures probability distributions over 
$|\rho|$}
In the course of this work, Charles Dunkl further communicated to us a 
result (following his joint work with K. {\.Z}yzckowski reported in \cite{dunkl}, where 
"the machinery for producing densities from moments of Pochhammer type" was developed)
giving the univariate probability distribution over $t \in [0,1]$ that reproduces the Hilbert-Schmidt moments of $t=2^8 |\rho|$, where $\rho$ is a generic two-rebit density matrix. (If we set $n=0$ in our general 
[bivariate] determinantal moment framework above, we obtain the [univariate] moments of $|\rho|$.) This probability distribution took the 
form (cf. \cite{csz}[eq. (4.3)])
\begin{equation} \label{marginaldistribution}
\frac{63}{8} \left(\sqrt{1-\sqrt{t}} \left(-8 t-9 \sqrt{t}+2\right)+15 t
   \log \left(\sqrt{1-\sqrt{t}}+1\right)-\frac{15}{4} t \log (t)\right)
\end{equation}
(see Appendix~\ref{DensityHS} below for further details).
At the suggestion of the author, Dunkl was also able to derive, in similar fashion, the Bures metric \cite{szBures,ingemarkarol} counterpart of this Hilbert-Schmidt  result (\ref{marginaldistribution}). It took the form (Appendix~\ref{DensityBures})
\begin{equation} \label{marginaldistributionBures}
\frac{-4 \sqrt{\sqrt{t}-t} \left(2 \sqrt{t}+13\right)+3 \pi  \left(4
   \sqrt{t}+1\right)+2 \left(12 \sqrt{t}+3\right) \sin ^{-1}\left(1-2
   \sqrt{t}\right)}{\pi  \sqrt{t}}.
\end{equation}
In Fig.~\ref{fig:twoprobs} we display these two (Hilbert-Schmidt and Bures) probability distributions.
\begin{figure} 
\includegraphics{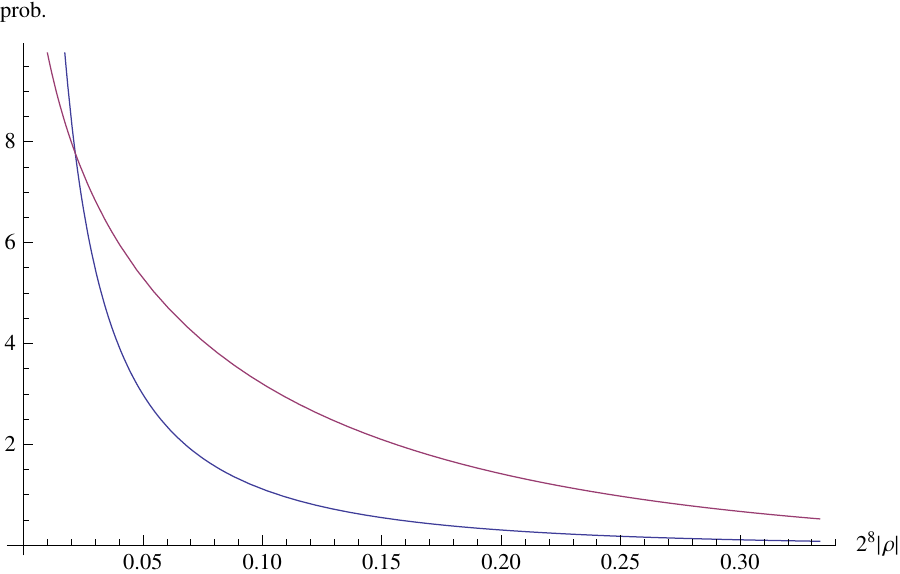}
\caption{\label{fig:twoprobs}Probability distributions (\ref{marginaldistribution}) 
and (\ref{marginaldistributionBures}) over $t= 2^8 |\rho|$ ($t \in [0,1]$). The Hilbert-Schmidt (red) curve dominates the Bures curve above $t = 0.021702$.}
\end{figure}
\section{Discussion}
\subsection{Background}
A basic linear-algebraic criterion that a Hermitian matrix be nonnegative-definite, that is have all its eigenvalues nonnegative, is that all its principal minors be nonnegative. In \cite{advances}, we were able to implement this criterion, in part, making use of the $3 \times 3$ minors, establishing  thereby that the Hilbert-Schmidt probability a generic two-rebit system is separable is bounded above by $\frac{1129}{2100} \approx 0.537619$. 
(The {\it absolute} separability probability of 
$\frac{6928-2205 \pi }{2^{9/2}} \approx 0.0348338$ provided the best exact {\it lower} bound established in this specific setting \cite{advances}, it appeared. The set of absolutely separable two-qubit states are described in Figs. 1-5 in \cite{JMP2008} (cf. \cite{kuszyczkowski,thirring,zanardi}). No immediate application of the moment-based approach adopted in this study to the description of the absolutely separable states is apparent.) That study \cite{advances} was a continuation of a series of papers of ours 
(including \cite{slaterJGP,slaterqip,slaterA,slaterC,slaterPRA,slaterPRA2,slaterJGP2,pbsCanosa,slater833}) in which we examined the separability probability question--for the Hilbert-Schmidt as well as various monotone (such as the Bures) metrics--from a variety of mathematical perspectives, employing a number of density-matrix parameterizations. A major motivation in undertaking the moment-related 
analyses reported above was to further sharpen
our separability probability estimates, perhaps even being able to arrive at an estimate accurate to several decimal places, and possibly obtain thereby convincing evidence for a particular true value.

Despite the considerable computational efforts expended in calculating high-order moments, the goal of high accuracy nevertheless appeared  remote--that is, until the apparent advances of Dunkl  
(Appendix~\ref{appDunkl}) that we have sought to subsequently exploit above. This 
somewhat pessimistic viewpoint had been based on a continuing series of attempts by us--using a wide variety of probability-density reconstruction methodologies--to isolate the two-rebit separability probability on the basis of the initially computed (limited number of) thirteen moments.
As an example (cf. sec.~\ref{Gaussian}), use of the nonparametric procedure of Mnatsakanov 
\cite{mnatsakanov2}, yielded HS generic two-rebit separability probability estimates of 0.4582596, 0.42970496 and 0.40321291 based on the first eleventh, twelfth and thirteen moments of $|\rho^{PT}|$ (sec.~\ref{PPTmoments}), so, no convergence was apparent, at least,  with these few moments. The corresponding estimates were 0.5414052, 0.3923661 and 0.4792091 based on eleventh, twelfth and thirteen moments of $|\rho| |\rho^{PT}|$ (sec.~\ref{EQUALmoments}). Use of the first ten moments in a certain
maximum-entropy reconstruction methodology \cite{Biswas} gave an estimate of 0.409858.
Additionally, incorporation of the first twelve moments into an
adaptive spline-based algorithm \cite{john2} gave 0.4502338.  The semiparametric Legendre-polynomial-based reconstruction approach of Provost \cite{Provost}--our chief computational procedure in the main body of this paper--gave
estimates of 0.3856787 and 0.4846628 based on the first thirteen moments
of $|\rho^{PT}|$ and $|\rho| |\rho^{PT}|$, respectively.

We had, thus,  before the general formula of Dunkl, encountered evident difficulties in ascertaining to {\it high} accuracy the values of separability probabilities. These difficulties, it seemed, perhaps 
manifested the {\it  NP}-hardness of the problem of distinguishing separable quantum states from entangled ones  \cite{gurvits,ioannou,gharibian}.
As possible evidence for such a contention, if one knew {\it all} the generic HS two-rebit moments of $|\rho^{PT}|$, then presumably one could determine the associated separability probability to arbitrarily high accuracy. But to know all these moments, it appeared that one would have to know an indefinitely large number of the functions $C_{3 n-i}$ 
((\ref{specfunct1})-(\ref{specfunct2})), from which the needed constant terms could be extracted. In the apparent absence of a generating rule for
these increasingly high-order functions (but see  
Appendix~\ref{appDunkl}), an indefinitely large amount of computation appeared to be required. ("Although [quantum entanglement] is usually fragile to the environment, it is robust against conceptual and mathematical tools, the task of which is to decipher its rich structure" \cite[p. 865]{family}.) "In \cite[sec. II.B]{HSorthogonal}, an earlier study of ours of the moments for two-rebit systems, we encountered a somewhat analogous rather intractable 
state-of-affairs, employing the Bloore 
(correlation-coefficient) parameterization of density matrices (and not the Cholesky decomposition parameterization, as in this study). There, a general formula for the denominators of certain important "intermediate functions" could be discerned, but only explicit results obtained for an {\it initial} set ($m=2, 4, 6,\dots,16$) of the corresponding numerators. So, higher-order moments--and, thus, high accuracy--appeared out of reach there (but certainly in light of the 
apparent progress--but not yet rigorously established--of Dunkl, the matters there might also be readdressed).
\subsection{Results}
In this paper, we have advanced four specific conjectures 
($\alpha = 0, \frac{1}{2}, 1, 2$) (Fig.~\ref{fig:FreeParameter}). The reader might have been somewhat skeptical of our strong predisposition to conjecture {\it rational} values for the various separability probabilities under consideration. 
A basis for this inclination had been established in \cite{slater833}, where a pattern of {\it rational} separability probabilities appeared through the application of {\it exact} methods to lower-dimensional non-generic (but more easily computed) quantum scenarios (sec.~\ref{specialized}).

In regard to the conjecture \cite[sec. IX.B]{slater833} that the Hilbert-Schmidt separability probability of generic (15-dimensional) two-qubit systems 
is $\frac{8}{33}$, K. {\.Z}yczkowski informally wrote: 
"It would be amazing if such a simple number occurs to be true!
I wonder then if it is likely that this result may be derived analytically
(by a clever integration), or perhaps even 'guessed' from some
symmetry arguments [which are still missing]". From the author's viewpoint, perhaps one of the chief hurdles here is simply the 
exceptionally high-dimensionality and {\it quartic} (separability) constraints that need to be addressed in any integration
("clever" or otherwise). Possibly with the advent of more powerful
symbolic (quantum?) computational systems, this obstacle might be directly overcome.
Also, in terms of symmetry principles, the (Keplerian) concept of "stella octangula" \cite{aravind,ericsson} has proved useful in studying separability, and might conceivably do so (in some higher-dimensional realization) in the future. Certain interesting aspects of convexity were applied in \cite{sbz} to obtain theorems pertaining to Hilbert-Schmidt separability probabilities.

The general formulas of Dunkl remain formally unproven. However, our confidence in their validity is certainly enhanced by the reasonableness and non-anomalous behavior (Figs.~\ref{fig:ListPlotTwoMoments}, \ref{fig:ListPlotComplexTwoMoments}, \ref{fig:MnatPDFRebits},  \ref{fig:MnatPDFQubits}, \ref{fig:ListPlotQuatTwoMoments},  \ref{fig:ListPlotOctTwoMoments}) of our various (separability) probability estimation procedures, for various values of $\alpha$, which rely upon them. If the formulas did not, in fact, yield genuine moments of probability distributions,  we would certainly expect that to be manifested, in some overt manner (negative probabilities, probabilities greater than unity,
non-convergent behavior,\ldots) in our reconstruction efforts.

It is interesting to note that of our three basic (two-rebit, 
-qubit, -"quaterbit" \cite{batle2}) separability probability conjectures--$\frac{29}{64}, \frac{8}{33}, \frac{26}{323}$--the two-qubit is the simplest, in the sense of having the smallest denominator (and numerator). The two-qubit systems exist conceptually in the framework of  (standard/conventional/phenomenological) {\it complex} quantum mechanics \cite[sec. 2]{quartic,baez}.

A further observation is that although in random matrix theory, a (Dyson-index) parameter
$\beta =1$ (the dimension of the corresponding {\it division algebra} \cite{baez}) is typically assigned to the real systems, in the (Cholesky decomposition-based) analysis of Dunkl (App.~\ref{appDunkl}), the use, instead, of 
$\alpha =\frac{1}{2}$ appears to be natural--since one-halves repeatedly arise in the integration over the real sphere in $\mathcal{R}^{10}$.

Knowledge of all the moments of $|\rho^{PT}|$ and $|\rho| |\rho^{PT}|$
theoretically determines the {\it complete} probability distributions of these two variables (since the ranges of these two variables are bounded).
In some sense, this constitutes more information than it might seem one should require to determine the {\it single} (separability) probability of primary, motivational interest \cite{ZHSL}. So, if at some point in time, the separability probability questions can be resolved by some more {\it direct} methods, than it may appear that the analytical moment-based approach pursued here was more than was, in fact, truly required for the task at hand. Nevertheless, in the interim, this approach has clearly greatly advanced our knowledge of the ranges within which the separability probabilities must lie--even if not helping to pinpoint their conjectured exact (simple rational) values.
\subsection{Bures analyses}
In a naive exercise, we investigated whether or not the bivariate moment formulas presented here might further hold--at least up to proportionality--if one were to simply replace the expectation with respect to the Hilbert-Schmidt metric in them by expectation with respect to the Bures (minimal monotone) metric \cite{szBures,ingemarkarol,andai,slaterC,slaterJGP}. However, such a possible relationship appeared to be quite emphatically ruled out, at least with the one specific example, formula (\ref{thirdnew}) above, we numerically studied in these regards.

In \cite[eq. (16)]{slaterJGP} we had--based on extensive quasi-Monte Carlo numerical integrations--advanced the hypothesis that the two-qubit {\it Bures} separability probability took the form (with the "silver mean", 
$\sigma_{Ag} = \sqrt{2}-1$)
\begin{equation} \label{silverconjecture}
P^{sep}_{Bures} = \frac{1680 \sigma_{Ag}}{\pi^8} \approx 0.07333893767 
\end{equation}
(which we do note is obviously {\it irrational}--in contrast to our Hilbert-Schmidt conjectures).
We have recently begun to reexamine the results of that 2005 study, particularly in light of the later (2009) development, making use of Ginibre ensembles, of a 
"simple and efficient algorithm to generate at random, density matrices distributed according to the Bures measure" \cite{osipov} (cf. \cite[eq. (22)]{generating}). 
In an {\it ongoing} calculation, employing extended-precision independent normal random variables, we have obtained (using the normal approximation to the binomial distribution)--based on 281,350,000 realizations (20,627,508 being separable, giving a probability of 0.0733162)--a $95\%$ confidence interval $\{0.07328572, 0.07334664\}$. We note that this interval {\it does contain} the conjectured value  (\ref{silverconjecture}) for the true Bures two-qubit separability probability.
(Consistently with these analyses,  if we introduce our Hilbert-Schmidt two-qubit separability-probability conjecture of $\frac{8}{33}$ into the inequality of Ye \cite[mid. p. 7]{ye}, we obtain  0.00373882 as a lower bound on the {\it Bures} two-qubit separability probability. Application of the very next inequality of Ye appears to yield 599089., obviously greater than 1, as an {\it upper} bound on this {\it probability}.)

\appendix
\section{Two-rebit Hilbert-Schmidt moments $\left\langle |\rho^{PT}|^{n} \right\rangle_{2-rebit/HS}$, $n=1,\ldots,13$} \label{PPTmoments}
\begin{equation}
\left(
\begin{array}{ccc}
 1 & -\frac{1}{858} & -0.0011655 \\
 2 & \frac{27}{2489344} & 0.0000108462 \\
 3 & -\frac{8363}{66216550400} & -1.26298\times 10^{-7} \\
 4 & \frac{21859}{10443295948800} & 2.09311\times 10^{-9} \\
 5 & -\frac{23071}{539633583390720} & -4.27531\times 10^{-11} \\
 6 & \frac{3317321}{3253917653076541440} & 1.01949\times 10^{-12} \\
 7 & -\frac{419856257}{15366774022001834065920} & -2.73223\times 10^{-14}
   \\
 8 & \frac{16945249}{21117403549591928832000} & 8.02431\times 10^{-16} \\
 9 & -\frac{6102620963}{240565904621616585139814400} & -2.53678\times
   10^{-17} \\
 10 & \frac{87816716413}{103068223454742370906999357440} & 8.52025\times
   10^{-19} \\
 11 & -\frac{7685831825319}{255310031843279606667374504181760} &
   -3.01039\times 10^{-20} \\
 12 & \frac{23559692226221}{21217623285399369347467109090721792} &
   1.11038\times 10^{-21} \\
 13 & -\frac{31283325154283}{736092406055063912488279599166259200} &
   -4.24992\times 10^{-23}
\end{array}
\right)
\end{equation}
\section{Two-rebit Hilbert-Schmidt moments $\left\langle (|\rho|  |\rho^{PT}|)^{n} \right\rangle_{2-rebit/HS}$, $n=1,\ldots,13$} \label{EQUALmoments}
\begin{equation}
\left(
\begin{array}{ccc}
 1 & 0 & 0. \\
 2 & \frac{7}{5696343244800} & 1.22886\times 10^{-12} \\
 3 & \frac{1}{677899511057612800} & 1.47514\times 10^{-18} \\
 4 & \frac{1}{45973294808920227840000} & 2.17518\times 10^{-23} \\
 5 & \frac{1}{11662680803407302839532257280} & 8.57436\times 10^{-29} \\
 6 & \frac{3929}{4158654163938276392103553381781471232} & 9.44777\times
   10^{-34} \\
 7 & \frac{1}{158158366213274948625327048295175946240} & 6.32278\times
   10^{-39} \\
 8 & \frac{71527}{1091771390479438557169317171313498708778365747200} &
   6.55146\times 10^{-44} \\
 9 & \frac{4847}{8524774835462825812953111833131999123882778862551040} &
   5.68578\times 10^{-49} \\
 10 &
   \frac{2637}{441859421690475898778224458156196857558486829112995348480}
   & 5.96796\times 10^{-54} \\
 11 &
   \frac{1}{16833241044745336849504728327369136893812113975649318731776}
   & 5.94063\times 10^{-59} \\
 12 &
   \frac{66838003}{103562821755098721107694750210986399334006111231977898
   090691403395891200} & 6.45386\times 10^{-64} \\
 13 &
   \frac{55601}{799197847439412434413767694576364829625113476005862347670
   3807122125619200} & 6.9571\times 10^{-69}
\end{array}
\right)
\end{equation}
\section{Moments of $|\rho^{PT}|^{n}, n=1,\ldots,10$, for minimally degenerate pairs of rebits} \label{AREA}
\begin{equation}
\left(
\begin{array}{ccc}
 1 & -\frac{5}{2376} & -0.00210438 \\
 2 & \frac{7}{380160} & 0.0000184133 \\
 3 & -\frac{9}{34777600} & -2.58787\times 10^{-7} \\
 4 & \frac{443}{89942261760} & 4.92538\times 10^{-9} \\
 5 & -\frac{461}{4032782401536} & -1.14313\times 10^{-10} \\
 6 & \frac{5455}{1785064543223808} & 3.05591\times 10^{-12} \\
 7 & -\frac{631}{6948198442598400} & -9.08149\times 10^{-14} \\
 8 & \frac{474017}{161763811601154048000} & 2.9303\times 10^{-15} \\
 9 & -\frac{4003573}{39645007353595350220800} & -1.00986\times 10^{-16}
   \\
 10 & \frac{3397}{924892257224239349760} & 3.67286\times 10^{-18}
\end{array}
\right)
\end{equation}
\section{Two-rebit and two-qubit moments}  \label{appDunkl}

\textbf{Charles F. Dunkl}\footnote{Department of Mathematics, University of
Virginia, Charlottesville VA, 22904-4137}\footnote{Email: cfd5z@virginia.edu}

Let $\Omega$ denote the set of $4$-by-$4$ (symmetric) real positive definite
matrices, and let $\Omega_{1}$ denote the matrices of trace one in $\Omega$.
Recall $\left\langle X\right\rangle $ denotes the expectation of the random
variable $X$, with the associated probability density being implicit from the
text. Furthermore $\left\vert \rho\right\vert $ denotes det $\rho$.

\subsection{Construction of density functions}

We describe the tools used to determine densities whose moment sequence is
given in Pochhammer form. Here we restrict to densities supported on $\left[
0,1\right]  $. Let $f\left(  x\right)  $ be defined on $0\leq x\leq1$, such
that $f\left(  x\right)  \geq0$, $f$ is continuous on $0<x<1$ and $\int
_{0}^{1}f\left(  x\right)  dx=1$. There is an associated random variable $X$,
with $\Pr\left\{  a<X<b\right\}  =\int_{a}^{b}f\left(  x\right)  dx$. The
moment sequence is $\left\langle X^{n}\right\rangle =\int_{0}^{1}x^{n}f\left(
x\right)  dx,n=0,1,2\ldots$. Observe that the moment sequence uniquely defines
the density because the support is a bounded interval.

First we consider a beta-type distribution: let $\alpha,\beta>0$, and
\begin{align}
f\left(  x\right)   &  =\frac{1}{B\left(  \alpha,\beta\right)  }x^{\alpha
-1}\left(  1-x\right)  ^{\beta-1},0<x<1,\label{betaX}\\
\int_{0}^{1}x^{n}f\left(  x\right)  dx  &  =\frac{\left(  \alpha\right)  _{n}%
}{\left(  \alpha+\beta\right)  _{n}},n=0,1,2,\ldots.\nonumber
\end{align}
(Recall $B\left(  \alpha,\beta\right)  =\frac{\Gamma\left(  \alpha\right)
\Gamma\left(  \beta\right)  }{\Gamma\left(  \alpha+\beta\right)  }$.) This
uses the identity $\Gamma\left(  \alpha+n\right)  /\Gamma\left(
\alpha\right)  =\left(  \alpha\right)  _{n}:=\prod_{i=1}^{n}\left(
\alpha+i-1\right)  $, the Pochhammer symbol.

\begin{lemma}
\label{X1X2}Suppose $X_{1},X_{2}$ are independent random variables on $\left[
0,1\right]  $ with densities $f_{i}$, $i=1,2$. Then the density for
$X_{1}X_{2}$ is%
\[
f\left(  x\right)  :=\int_{x}^{1}f_{1}\left(  t\right)  f_{2}\left(  \frac
{x}{t}\right)  \frac{1}{t}dt.
\]
If the moments of $X_{1},X_{2}$ are $\mu_{n}^{\left(  i\right)  }=\left\langle
X_{i}^{n}\right\rangle =\int_{0}^{1}x^{n}f_{i}\left(  x\right)  dx$ then
$\left\langle X_{1}^{n}X_{2}^{n}\right\rangle =\mu_{n}^{\left(  1\right)  }%
\mu_{n}^{\left(  2\right)  },n=0,1,2\ldots$. , that is,
\[
\int_{0}^{1}x^{n}f\left(  x\right)  dx=\mu_{n}^{\left(  1\right)  }\mu
_{n}^{\left(  2\right)  },n=0,1,2,\ldots.
\]

\end{lemma}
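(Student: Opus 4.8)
The plan is to obtain the asserted density by computing the cumulative distribution function of the product $Y=X_{1}X_{2}$ and differentiating it, and then to read off the moment identity either directly from independence or by an explicit interchange of the order of integration. Since $X_{1},X_{2}\in[0,1]$ we have $Y\in[0,1]$, so both the density and its moments are supported on the bounded interval on which the lemma is stated.

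First I would write, for $0\le x\le1$ and using independence (so that the joint density of $(X_{1},X_{2})$ is $f_{1}(t)f_{2}(s)$),
\[
\Pr\{Y\le x\}=\iint_{\{0\le t,s\le1,\ ts\le x\}}f_{1}(t)f_{2}(s)\,ds\,dt .
\]
For fixed $t\le x$ the constraint $s\le x/t$ is vacuous on $[0,1]$, while for $t>x$ it forces $s\le x/t<1$. Splitting the $t$-integral at $t=x$ and using $\int_{0}^{1}f_{2}=1$ gives
\[
F_{Y}(x)=\int_{0}^{x}f_{1}(t)\,dt+\int_{x}^{1}f_{1}(t)F_{2}\!\left(\tfrac{x}{t}\right)dt ,
\]
where $F_{2}$ is the distribution function of $X_{2}$. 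Next I would differentiate in $x$: the first term contributes $f_{1}(x)$, while Leibniz differentiation of the second term—whose lower limit $a(x)=x$ and integrand both depend on $x$—produces a boundary term $-a'(x)f_{1}(x)F_{2}(x/x)=-f_{1}(x)F_{2}(1)=-f_{1}(x)$ (using $F_{2}(1)=1$) together with $\int_{x}^{1}f_{1}(t)f_{2}(x/t)\,t^{-1}\,dt$, since $\partial_{x}F_{2}(x/t)=t^{-1}f_{2}(x/t)$. The two copies of $f_{1}(x)$ cancel, leaving exactly $f(x)=\int_{x}^{1}f_{1}(t)f_{2}(x/t)\,t^{-1}\,dt$, as claimed.

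For the moment identity I would argue most cleanly by independence, $\langle Y^{n}\rangle=\langle X_{1}^{n}X_{2}^{n}\rangle=\langle X_{1}^{n}\rangle\langle X_{2}^{n}\rangle=\mu_{n}^{(1)}\mu_{n}^{(2)}$, together with $\langle Y^{n}\rangle=\int_{0}^{1}x^{n}f(x)\,dx$ from the first part. As a self-contained check—one that also sidesteps the differentiation entirely—I would instead substitute the formula for $f$ into $\int_{0}^{1}x^{n}f(x)\,dx$, interchange the order of integration over the triangle $\{0\le x\le t\le1\}$ (legitimate by Tonelli, as $f_{1},f_{2}\ge0$), and set $u=x/t$; the double integral then factors as $\bigl(\int_{0}^{1}t^{n}f_{1}(t)\,dt\bigr)\bigl(\int_{0}^{1}u^{n}f_{2}(u)\,du\bigr)=\mu_{n}^{(1)}\mu_{n}^{(2)}$. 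Taking $n=0$ here simultaneously confirms $\int_{0}^{1}f=1$, so $f$ is a genuine probability density; combined with $f\ge0$ and the moment-uniqueness for bounded support noted in the text, this gives an alternative route to the density statement itself.

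The main obstacle is not conceptual but a matter of regularity: since $f_{1},f_{2}$ are assumed only nonnegative and continuous on the \emph{open} interval, the differentiation-under-the-integral step and the exact cancellation at $t=x$ must be justified with some care at the endpoints. I expect the cleanest writeup to lean on the Tonelli interchange of the previous paragraph for the moment identity (where nonnegativity makes everything automatic) and to present the Leibniz computation as the derivation of the density, flagging the boundary term $F_{2}(1)=1$ as the crux; everything else is routine bookkeeping.
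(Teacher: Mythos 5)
Your argument is correct and complete. Note that the paper itself offers no proof of this lemma at all---it merely cites its earlier appearance in Dunkl and \.{Z}yczkowski's work---so there is no in-paper argument to compare against; your derivation supplies exactly what is omitted. Both halves check out: the CDF computation $F_Y(x)=\int_0^x f_1+\int_x^1 f_1(t)F_2(x/t)\,dt$ with the Leibniz boundary term $-f_1(x)F_2(1)$ cancelling the first contribution is the standard route to the product density, and the Tonelli interchange over $\{0\le x\le t\le 1\}$ followed by the substitution $u=x/t$ cleanly factors the moment integral. Your observation that the $n=0$ case of the latter verifies $\int_0^1 f=1$, and that moment-uniqueness on a bounded interval (which the paper does invoke) then gives an alternative, differentiation-free proof of the density formula, is a nice economy; if anything, that is the version to write up, since it avoids the endpoint-regularity caveats you correctly flag for the Leibniz step under the paper's weak hypotheses on $f_1,f_2$.
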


The Lemma was stated and used in \cite[p.123521-20]{dunkl}. Also we use the
duplication formulae for Pochhammer symbols:%
\begin{align*}
\left(  a\right)  _{2n}  &  =2^{2n}\left(  \frac{a}{2}\right)  _{n}\left(
\frac{a+1}{2}\right)  _{n},\\
\left(  2n\right)  !  &  =\left(  1\right)  _{2n}=2^{2n}n!\left(  \frac{1}%
{2}\right)  _{n},\\
\left(  2n+1\right)  !  &  =\left(  2\right)  _{2n}=2^{2n}n!\left(  \frac
{3}{2}\right)  _{n}.
\end{align*}

\subsection{Density of the determinant under the Hilbert-Schmidt metric} \label{DensityHS}

The $10$-dimensional cone $\Omega$ is equipped with the measure $\prod_{1\leq
i\leq j\leq4}d\rho_{ij}$ (where $\rho=\left(  \rho_{ij}\right)  _{i,j=1}^{4}$ is the
generic matrix). The probability distribution on $\Omega_{1}$ is the
($9$-dimensional) restriction of this measure.

The following lemma applies to $N$-by-$N$ positive-definite matrices for any
$N=2,3,\ldots$. Each such matrix $\rho$ has a Cholesky decomposition:%
\[
\rho=C^{t}C,
\]
where $C$ is upper triangular with entries $c_{ij},$ $c_{ij}=0$ for $i>j$ and
$c_{ii}\geq0$ for all $i$. The entries of $\rho$ are $\rho_{ij}=\sum_{k=1}%
^{N}c_{ki}c_{kj}=\sum_{k=1}^{\min\left(  i,j\right)  }c_{ki}c_{kj}$. Consider
the Jacobian matrix $\frac{\partial \rho}{\partial c}$ where the dependent
variables are $\rho_{ij},i\leq j$.

\begin{lemma}
Suppose $\rho=C^{t}C$ then%
\[
\left\vert \det\frac{\partial \rho}{\partial c}\right\vert =2^{N}\prod_{i=1}%
^{N}c_{ii}^{N+1-i}.
\]

\end{lemma}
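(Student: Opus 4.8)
The plan is to compute the Jacobian determinant directly by choosing an ordering of the $\frac{1}{2}N(N+1)$ dependent variables $\rho_{ij}$ ($i\le j$) and of the $\frac{1}{2}N(N+1)$ independent variables $c_{kl}$ ($k\le l$) that renders $\frac{\partial \rho}{\partial c}$ triangular, after which the determinant is just the product of the diagonal entries. The organizing principle is the \emph{column index}: I group the functions into blocks $B_j=\{\rho_{ij}:1\le i\le j\}$ and the variables into blocks $C_l=\{c_{kl}:1\le k\le l\}$, ordering the blocks by increasing $j$ and $l$, and ordering the entries within each block by increasing row index.

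The key structural observation is that, since $\rho_{ij}=\sum_{k=1}^{\min(i,j)}c_{ki}c_{kj}$, the entry $\rho_{ij}$ (with $i\le j$) depends only on the entries of $C$ lying in columns $i$ and $j$, and both of these column indices are $\le j$. Hence $\partial\rho_{ij}/\partial c_{kl}=0$ whenever $l>j$, so the block $\partial B_j/\partial C_l$ vanishes for $l>j$. First I would use this to conclude that, in the chosen ordering, the Jacobian is block lower triangular, so that $\det\frac{\partial\rho}{\partial c}=\prod_{j=1}^{N}\det\!\big(\partial B_j/\partial C_j\big)$.

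Next I would analyze a single diagonal block $\partial B_j/\partial C_j$, a $j\times j$ matrix with rows indexed by $i=1,\dots,j$ and columns by $k=1,\dots,j$ (the variable $c_{kj}$). A short differentiation gives, for $i<j$, $\partial\rho_{ij}/\partial c_{kj}=c_{ki}$ when $k\le i$ and $0$ when $k>i$, while for the last row $\partial\rho_{jj}/\partial c_{kj}=2c_{kj}$. Since every nonzero entry sits in a column $k\le i$ (the last row being automatically admissible because $i=j$), this block is itself lower triangular, with diagonal entries $c_{ii}$ for $i<j$ and $2c_{jj}$ in the final slot. Thus $\det(\partial B_j/\partial C_j)=2\,c_{jj}\prod_{i=1}^{j-1}c_{ii}$, and multiplying over $j$ while counting the exponent of each $c_{ii}$ (appearing once as the final diagonal entry of block $j=i$, carrying the factor $2$, and once for each of the $N-i$ blocks $j>i$) yields $2^{N}\prod_{i=1}^{N}c_{ii}^{\,N+1-i}$; taking absolute values is harmless since each $c_{ii}\ge0$.

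I do not anticipate a deep obstacle here: the result is essentially bookkeeping. The one place demanding care is verifying the off-diagonal-block vanishing precisely and confirming that the chosen block and within-block orderings make both the coarse (block) and fine (within-block) structure lower triangular simultaneously, so that no stray sign or missed nonzero entry spoils the clean product. The factor $2$ (rather than $1$) on each diagonal must likewise be tracked, arising solely from the squared diagonal terms $c_{kj}^{2}$ contributing to $\rho_{jj}$.
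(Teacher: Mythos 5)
Your proof is correct and rests on the same idea as the paper's: choose an ordering of the entries $\rho_{ij}$ and $c_{kl}$ under which the Jacobian becomes (block) lower triangular, then read off the product of diagonal entries $c_{ii}$ (for $i<j$) and $2c_{ii}$ (for $i=j$). The paper uses a row-major ordering and gets triangularity in one step, whereas you use a column-major block ordering and verify triangularity at two levels, but this is only a cosmetic difference in bookkeeping.
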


\begin{proof}
We use the simple fact: suppose $y_{i}=f_{i}\left(  x_{1},x_{2},\ldots
,x_{i}\right)  $, $1\leq i\leq N$ then the matrix $\left(  \frac{\partial
y_{i}}{\partial x_{j}}\right)  $ is lower-triangular ($0$ for $j>i$) and
$\det\left(  \frac{\partial y_{i}}{\partial x_{j}}\right)  =\prod_{i=1}%
^{N}\frac{\partial y_{i}}{\partial x_{i}}$. Now order the (independent)
variables: $c_{11},c_{12},\ldots,c_{1N},c_{22},\ldots$ , $c_{2N}$, $c_{33}$,
\ldots$c_{N-1,N-1},c_{N-1,N},c_{NN}$. For $i\leq j$, $\rho_{ij}=\sum_{k=1}%
^{i-1}c_{ki}c_{kj}+c_{ii}c_{ij}$and thus
\[
\left\vert \det\frac{\partial \rho}{\partial c}\right\vert =\prod_{i=1}^{N}%
\prod_{j=i}^{N}\frac{\partial \rho_{ij}}{\partial c_{ij}}=\prod_{i=1}^{N}\left(
2c_{ii}^{N-i+1}\right)  .
\]

\end{proof}

Now set $N=4$. The pre-image $S$ of $\Omega_{1}$ (for the map $C\mapsto
C^{t}C$) is a modified octant of the unit sphere in $\mathbb{R}^{10}$, because
$Tr\left(  C^{t}C\right)  =\sum_{1\leq i\leq j\leq4}c_{ij}^{2}$. Recall the
condition $c_{ii}\geq0$, but the other entries can have arbitrary signs. The
surface measure $dm\left(  C\right)  $ on $S$ is essentially a Dirichlet
measure: consider a monomial on $S$, that is,%
\[
f\left(  C\right)  :=\prod_{1\leq i\leq j\leq4}c_{ij}^{n_{ij}},
\]
then

\begin{enumerate}
\item if $n_{ij}$ is odd for some $i<j$ then $\int_{S}f\left(  C\right)
dm\left(  C\right)  =0$,

\item if $n_{ij}$ is even for each $i<j$ then%
\[
\int_{S}f\left(  C\right)  dm\left(  C\right)  =\frac{\Gamma\left(  5\right)
}{\Gamma\left(  \frac{1}{2}\right)  ^{10}}\frac{1}{\Gamma\left(  5+\frac{1}%
{2}\sum_{1\leq i\leq j\leq4}n_{ij}\right)  }\prod_{1\leq i\leq j\leq4}%
\Gamma\left(  \frac{1}{2}+\frac{n_{ij}}{2}\right)  ,
\]

\item if $n_{ij}$ is even for each $i\leq j$, and $N:=\sum_{1\leq i\leq
j\leq4}n_{ij}$ then%
\[
\int_{S}f\left(  C\right)  dm\left(  C\right)  =\frac{1}{\left(  5\right)
_{N}}\prod_{1\leq i\leq j\leq4}\left(  \frac{1}{2}\right)  _{n_{ij}/2}.
\]

\end{enumerate}

In our usage either case 1 or case 3 applies. Combining the Jacobian and the
fact $\left\vert \rho\right\vert =c_{11}^{2}c_{22}^{2}c_{33}^{2}c_{44}^{2}$ we
obtain (for normalized measure, that is $\gamma\int_{S}\det\frac{\partial
\rho}{\partial c}dm\left(  C\right)  =1$), $k=0,1,2,\ldots$:

\begin{enumerate}
\item if $n_{ij}$ is odd for some $i<j$ then $\int_{S}\left\vert
\rho\right\vert ^{k}f\left(  C\right)  \det\frac{\partial\rho}{\partial
c}dm\left(  C\right)  =0$,

\item if $n_{ij}$ is even for each $i\leq j$, and $N:=\sum_{1\leq i\leq
j\leq4}n_{ij}$ then%
\begin{gather*}
\gamma\int_{S}\left\vert \rho\right\vert ^{k}f\left(  C\right)  \det
\frac{\partial \rho}{\partial c}dm\left(  C\right)  =\frac{1}{\left(  10\right)
_{4k+N/2}}\\
\times\left(  \frac{5}{2}\right)  _{k+n_{11}/2}\left(  2\right)  _{k+n_{22}%
/2}\left(  \frac{3}{2}\right)  _{k+n_{33}/2}\left(  1\right)  _{k+n_{44}%
/2}\prod_{1\leq i<j\leq4}\left(  \frac{1}{2}\right)  _{n_{ij}/2}.
\end{gather*}

\end{enumerate}

The special case $f\left(  C\right)  =1$ provides the moments of the random
variable $\left\langle \rho\right\rangle $; indeed
\[
\gamma\int_{S}\left\vert \rho\right\vert ^{k}\det\frac{\partial \rho}{\partial
c}dm\left(  C\right)  =\frac{\left(  \frac{5}{2}\right)  _{k}\left(  2\right)
_{k}\left(  \frac{3}{2}\right)  _{k}\left(  1\right)  _{k}}{\left(  10\right)
_{4k\ }}.
\]
We know the range of $\left\vert \rho\right\vert $ is$\left[  0,\frac{1}%
{256}\right]  $ (the maximum is achieved at $\rho=\frac{1}{4}I$); to use the
previous results consider $X=2^{8}\left\vert \rho\right\vert $. Then%
\begin{align*}
\left\langle X^{n}\right\rangle  &  =2^{8n}\frac{\left(  \frac{5}{2}\right)
_{n}\left(  2\right)  _{n}\left(  \frac{3}{2}\right)  _{n}\left(  1\right)
_{n}}{2^{4n}\left(  5\right)  _{2n}\left(  \frac{11}{2}\right)  _{2n}%
\ }=2^{4n}\frac{2^{-2n}\left(  4\right)  _{2n}2^{-2n}\left(  2\right)  _{2n}%
}{\ \left(  5\right)  _{2n}\left(  \frac{11}{2}\right)  _{2n}}\\
&  =\frac{\ \left(  4\right)  _{2n}\ \left(  2\right)  _{2n}}{\ \left(
5\right)  _{2n}\left(  \frac{11}{2}\right)  _{2n}}.
\end{align*}
Thus $X$ is (equidistributed as) the product of two independent random
variables $X_{1},X_{2}$ with%
\begin{align*}
\left\langle X_{1}^{n}\right\rangle  &  =\frac{\left(  4\right)  _{2n}%
}{\left(  5\right)  _{2n}}=\frac{4}{4+2n}=\frac{2}{2+n},\\
\left\langle X_{2}^{n}\right\rangle  &  =\frac{\left(  2\right)  _{2n}%
}{\left(  \frac{11}{2}\right)  _{2n}}.
\end{align*}
Clearly $X_{1}$ has the density $f_{1}\left(  t\right)  =2t,0\leq t\leq1$. The
density of $X_{2}$ is%
\[
f_{2}\left(  t\right)  =\frac{1}{2B\left(  2,\frac{7}{2}\right)  }\left(
1-\sqrt{t}\right)  ^{5/2},
\]
because
\begin{align*}
\int_{0}^{1}t^{n}f_{2}\left(  t\right)  dt  &  =\frac{1}{2B\left(  2,\frac
{7}{2}\right)  }\int_{0}^{1}t^{n}\left(  1-\sqrt{t}\right)  ^{5/2}dt\\
&  =\frac{1}{B\left(  2,\frac{7}{2}\right)  }\int_{0}^{1}s^{2n}s\left(
1-s\right)  ^{5/2}ds=\frac{\left(  2\right)  _{2n}}{\left(  \frac{11}%
{2}\right)  _{2n}}.
\end{align*}

The density $f\left(  x\right)  $ of $X$ is given by%
\begin{align*}
f\left(  t\right)   &  =\int_{x}^{1}f_{1}\left(  \frac{x}{s}\right)
f_{2}\left(  s\right)  \frac{ds}{s}\\
&  =\frac{2}{2B\left(  2,\frac{7}{2}\right)  }\int_{x}^{1}\frac{x}{s}\left(
1-\sqrt{s}\right)  ^{5/2}\frac{ds}{s}\\
&  =\frac{2t}{B\left(  2,\frac{7}{2}\right)  }\int_{\sqrt{x}}^{1}u^{-3}\left(
1-u\right)  ^{5/2}du\\
&  =\frac{63x}{2}\int_{\sqrt{x}}^{1}u^{-3}\left(  1-u\right)  ^{5/2}du.
\end{align*}

The integral is evaluated as follows: set $u=1-s^{2},du=-2sds$,%
\begin{align*}
f\left(  x\right)   &  =63x\int_{0}^{\sqrt{1-\sqrt{x}}}\frac{s^{6}}{\left(
1-s^{2}\right)  ^{3}}ds\\
&  =\frac{63x}{8}\left\{  \frac{-s\left(  15-25s^{2}+8s^{4}\right)  }{\left(
1-s^{2}\right)  ^{2}}+\frac{15}{2}\ln\frac{\left(  1+s\right)  ^{2}}{1-s^{2}%
}\right\}  _{s=0}^{s=\sqrt{1-\sqrt{x}}}\\
&  =\frac{63}{8}\left\{  \left(  1-\sqrt{x}\right)  ^{1/2}\left(  2-9\sqrt
{x}-8x\right)  +15x\ln\left(  1+\sqrt{1-\sqrt{x}}\right)  -\frac{15}{4}x\ln
x\right\}  .
\end{align*}

Also $f\left(  x\right)  =O\left(  \left(  1-x\right)  ^{7/2}\right)  $ near
$x=1$.

\subsection{Density of the determinant under the Bures metric} \label{DensityBures}

Using the Bures metric one obtains%
\[
\left\langle \left\vert \rho\right\vert ^{n}\right\rangle =\frac{\left(
\frac{1}{2}\right)  _{n}\left(  1\right)  _{n}\left(  \frac{3}{2}\right)
_{2n}}{2^{8n}\left(  \frac{3}{2}\right)  _{n}\left(  2\right)  _{n}\left(
4\right)  _{2n}}=\frac{2^{-8n}}{\left(  n+1\right)  \left(  2n+1\right)
}\frac{\left(  \frac{3}{2}\right)  _{2n}}{\left(  4\right)  _{2n}},
\]
for $n=0,1,2,\ldots$. As above we consider the random variable $X=2^{8}%
\left\vert \rho\right\vert $.

The density $f\left(  x\right)  $ of $X$, for $0<x\leq1$, satisfies
\[
\int_{0}^{1}x^{n}f\left(  x\right)  dx=\frac{1}{\left(  n+1\right)  \left(
2n+1\right)  }\frac{\left(  \frac{3}{2}\right)  _{2n}}{\left(  4\right)
_{2n}},n=0,1,2,\ldots.
\]
We express $X$ as the product of two random variables.

Let
\[
f_{1}\left(  t\right)  =t^{-1/2}-1,0<t\leq1,
\]
then
\[
\int_{0}^{1}t^{n}f_{1}\left(  t\right)  dt=\frac{1}{\left(  n+1\right)
\left(  2n+1\right)  },n=0,1,2,\ldots.
\]

Next observe (from equation \ref{betaX}):
\[
\frac{\Gamma\left(  4\right)  }{\Gamma\left(  \frac{3}{2}\right)
\Gamma\left(  \frac{5}{2}\right)  }\int_{0}^{1}s^{n}s^{1/2}\left(  1-s\right)
^{3/2}ds=\frac{\left(  \frac{3}{2}\right)  _{n}}{\left(  4\right)  _{n}},
\]
so set $s=t^{1/2}$ (and note $\frac{\Gamma\left(  4\right)  }{\Gamma\left(
\frac{3}{2}\right)  \Gamma\left(  \frac{5}{2}\right)  }=\frac{16}{\pi}$,
$ds=\frac{1}{2}t^{-1/2}dt$) to obtain%
\[
\frac{8}{\pi}\int_{0}^{1}t^{n}t^{-1/4}\left(  1-t^{1/2}\right)  ^{3/2}%
dt=\frac{\left(  \frac{3}{2}\right)  _{2n}}{\left(  4\right)  _{2n}%
},n=0,1,2\ldots.
\]
Let%
\[
f_{2}\left(  t\right)  =\frac{8}{\pi}t^{-1/4}\left(  1-t^{1/2}\right)
^{3/2},0<t\leq1.
\]
By Lemma \ref{X1X2} the desired density function is%
\begin{align*}
f\left(  x\right)   &  =\int_{x}^{1}f_{1}\left(  \frac{x}{t}\right)
f_{2}\left(  t\right)  \frac{dt}{t}\\
&  =\frac{8}{\pi}\int_{x}^{1}\left(  \left(  \frac{t}{x}\right)
^{1/2}-1\right)  t^{-1/4}\left(  1-t^{1/2}\right)  ^{3/2}dt\\
&  =\frac{8}{\pi\sqrt{x}}\int_{x}^{1}\left(  t^{1/2}-x^{1/2}\right)
t^{-1/4}\left(  1-t^{1/2}\right)  ^{3/2}dt.
\end{align*}
Substitute $t=s^{2}$, then
\[
f\left(  x\right)  =\frac{16}{\pi\sqrt{x}}\int_{\sqrt{x}}^{1}\left(
s-\sqrt{x}\right)  s^{1/2}\left(  1-s\right)  ^{3/2}ds,
\]
an elementary\ integral; indeed%
\[
f\left(  x\right)  =\frac{1}{\pi\sqrt{x}}\left\{  3\pi\left(  4\sqrt
{x}+1\right)  -4\left(  13+2\sqrt{x}\right)  \sqrt{\sqrt{x}-x}-2\left(
12\sqrt{x}+3\right)  \arcsin\left(  2\sqrt{x}-1\right)  \right\}  .
\]
As with the Hilbert Schmidt metric, $f\left(  x\right)  =O\left(  \left(
1-x\right)  ^{7/2}\right)  $ near $x=1$.

\subsection{The joint moments of $\left\vert \rho\right\vert $ and $\left\vert
\rho^{PT}\right\vert $}

The partial transpose $\rho^{PT}$ of $\rho$ is obtained by interchanging the
values of $\rho_{14}$ and $\rho_{23}$ (and $\rho_{41}$ and $\rho_{32}$). In this section
we introduce a conjecture for
\[
\left\langle \left\vert \rho^{PT}\right\vert ^{n}\left\vert \rho\right\vert
^{k}\right\rangle ,k,n=0,1,2,3,\ldots,
\]
using the density on $\Omega_{1}$ coming from the Hilbert-Schmidt metric.

For the upper triangular matrix $C$ and $\rho=C^{t}C$ we find%
\begin{align*}
&  \left\vert \rho^{PT}\right\vert \\
&  =c_{11}^{2}c_{22}^{2}c_{33}^{2}c_{44}^{2}+2c_{11}c_{22}(c_{11}c_{14}%
-c_{12}c_{13}-c_{22}c_{23})\\
&  \times(-c_{11}c_{23}c_{44}^{2}-c_{23}c_{34}^{2}c_{11}+c_{11}c_{33}%
c_{34}c_{24}+c_{22}c_{33}^{2}c_{14}\\
&  -c_{22}c_{13}c_{33}c_{34}-c_{12}c_{33}^{2}c_{24}+c_{23}c_{33}c_{34}%
c_{12})\\
&  -(c_{11}c_{14}-c_{12}c_{13}-c_{22}c_{23})^{2}\\
&  \times(4c_{22}c_{23}c_{11}c_{14}+c_{11}^{2}c_{44}^{2}+c_{11}^{2}c_{34}%
^{2}+c_{11}^{2}c_{24}^{2}-2c_{11}c_{13}c_{22}c_{24}-2c_{11}c_{12}c_{33}%
c_{34}\\
&  -2c_{11}c_{12}c_{23}c_{24}+c_{22}^{2}c_{13}^{2}-2c_{12}c_{13}c_{22}%
c_{23}+c_{22}^{2}c_{33}^{2}+c_{12}^{2}c_{33}^{2}+c_{12}^{2}c_{23}^{2})\\
&  -(c_{11}c_{14}-c_{12}c_{13}-c_{22}c_{23})^{4}.
\end{align*}
Of course $\left\vert \rho\right\vert =c_{11}^{2}c_{22}^{2}c_{33}^{2}%
c_{44}^{2}$. We introduce some utility functions (throughout $n,k=0,1,2,\ldots
$). For a rational function $F\left(  k\right)  =\frac{p\left(  k\right)
}{q\left(  k\right)  }$ of $k$ define the degree to be $\deg\left(  p\right)
-\deg\left(  q\right)  $.%
\begin{align*}
F_{0}\left(  k\right)   &  =\left\langle \left\vert \rho\right\vert
^{k}\right\rangle =\frac{\left(  1\right)  _{k}\left(  \frac{3}{2}\right)
_{k}\left(  2\right)  _{k}\left(  \frac{5}{2}\right)  _{k}}{\left(  10\right)
_{4k}},\\
F_{1}\left(  n,k\right)   &  =\left\langle \left\vert \rho^{PT}\right\vert
^{n}\left\vert \rho\right\vert ^{k}\right\rangle /\left\langle \left\vert
\rho\right\vert ^{k}\right\rangle ,\\
F_{2}\left(  n,k\right)   &  =\left\langle \left\vert \rho\right\vert
^{k}\left(  \left\vert \rho^{PT}\right\vert -\left\vert \rho\right\vert
\right)  ^{n}\right\rangle /\left\langle \left\vert \rho\right\vert
^{k}\right\rangle ,\\
R\left(  n,k\right)   &  =F_{0}\left(  n+k\right)  /F_{0}\left(  k\right)
=\frac{\left(  k+1\right)  _{n}\left(  k+\frac{3}{2}\right)  _{n}\left(
k+2\right)  _{n}\left(  k+\frac{5}{2}\right)  _{n}}{\left(  4k+10\right)
_{4n}}.
\end{align*}
(Note $F_{2}\left(  0,k\right)  =1=R\left(  0,k\right)  $). The goal is to
find (and prove) a closed form for $F_{1}\left(  n,k\right)  $, that is a
general formula. Direct computation for $n=1,2,3$ shows that $F_{1}\left(
n,k\right)  $ is rational in $k$ of degree $0$; and at first glance, does not
have an obvious formula (for the numerator). Some experimentation leads to the
observation that $F_{1}\left(  n,k\right)  -R\left(  n,k\right)  $ is of
degree $-2$ (verified only for small $n$). This motivates the investigation of
the decomposition%
\begin{align*}
\left\vert \rho^{PT}\right\vert ^{n} &  =\sum_{j=0}^{n}\binom{n}{j}\left\vert
\rho\right\vert ^{n-j}\left(  \left\vert \rho^{PT}\right\vert -\left\vert
\rho\right\vert \right)  ^{j}\\
F_{1}\left(  n,k\right)   &  =\sum_{j=0}^{n}\binom{n}{j}F_{2}\left(
j,k+n-j\right)  R\left(  n-j,k\right)  .
\end{align*}
For $n=1$ we compute%
\[
F_{1}\left(  1,k\right)  =R\left(  1,k\right)  -\frac{1}{16\left(
4k+13\right)  \left(  k+3\right)  };
\]
this is an encouraging result, and it implies $F_{2}\left(  1,k\right)
=-\frac{1}{16\left(  4k+13\right)  \left(  k+3\right)  }$ , of degree $-2$.
From the known value of $F_{1}\left(  2,k\right)  $ and the equation
\[
F_{1}\left(  2,k\right)  =R\left(  2,k\right)  +2F_{2}\left(  1,k+1\right)
R\left(  1,k\right)  +F_{2}\left(  2,k\right)
\]
we find
\[
F_{2}\left(  2,k\right)  =\frac{\left(  k+12\right)  \left(  2k+7\right)
}{256\left(  k+3\right)  \left(  k+4\right)  \left(  4k+11\right)  \left(
4k+13\right)  \left(  4k+17\right)  }.
\]
This is of degree $-3$, rather than the hoped-for $-4$, and the factor
$\left(  k+12\right)  $ is not of the \textquotedblleft good\textquotedblright%
\ type, a divisor of $\left(  k+1\right)  _{4}$. So we try to modify
$F_{2}\left(  2,k\right)  $ by adding a bit of $F_{2}\left(  1,k+1\right)
R\left(  1,k\right)  $; in fact%
\[
F_{2}\left(  2,k\right)  +\frac{2}{k+1}F_{2}\left(  1,k+1\right)  R\left(
1,k\right)  =\frac{3}{128\left(  k+3\right)  \left(  k+4\right)  \left(
4k+11\right)  \left(  4k+17\right)  },
\]
of degree $-4$. We now have a \textquotedblleft good\textquotedblright%
\ expansion of $F_{1}\left(  2,k\right)  $, namely%
\[
R\left(  2,k\right)  +\frac{2k}{k+1}F_{2}\left(  1,k+1\right)  R\left(
1,k\right)  +\left(  F_{2}\left(  2,k\right)  +\frac{2}{k+1}F_{2}\left(
1,k+1\right)  R\left(  1,k\right)  \right)  .
\]
The terms are of degree $0,-2,-4$ and each is an expression in linear factors.
Next we consider $F_{2}\left(  3,k\right)  $. This turns out to be of degree
$-5$ (rather than $-6$). Some effort leads to the satisfactory result:%
\begin{align*}
&  F_{2}\left(  3,k\right)  +\frac{6}{k+1}F_{2}\left(  2,k+1\right)  R\left(
1,k\right)  +\frac{12}{\left(  k+1\right)  \left(  k+2\right)  }F_{2}\left(
1,k+2\right)  R\left(  2,k\right)  \\
&  =-\frac{45}{2048\left(  k+3\right)  \left(  k+4\right)  \left(  k+5\right)
\left(  4k+11\right)  \left(  4k+13\right)  \left(  4k+21\right)  },\\
&  \frac{3\left(  k-1\right)  }{k+1}F_{2}\left(  2,k+1\right)  R\left(
1,k\right)  +\frac{6\left(  k-1\right)  }{\left(  k+1\right)  \left(
k+2\right)  }F_{2}\left(  1,k+2\right)  R\left(  2,k\right)  \\
&  =\frac{9\left(  k-1\right)  \left(  k+2\right)  \left(  2k+3\right)
}{4096\left(  k+3\right)  \left(  k+4\right)  \left(  k+5\right)  \left(
4k+11\right)  \left(  4k+13\right)  \left(  4k+15\right)  \left(
4k+21\right)  }.
\end{align*}
At this point there are enough examples to try to fit a formula to these
expansions. Indeed, for $0\leq j\leq n$ let
\begin{align*}
c_{j}\left(  n,k\right)   &  =\frac{1}{2^{6n}\left(  k+3\right)  _{n}\left(
2k+\frac{11}{2}\right)  _{2n}}\\
&  \times\frac{4^{j}n!\left(  \frac{1}{2}\right)  _{j}}{\left(  n-j\right)
!}\left(  -2k-2n-\frac{7}{2}\right)  _{j}\left(  k-j+1\right)  _{n-j}\left(
k+\frac{3}{2}\right)  _{n-j}\left(  k+2\right)  _{n-j},
\end{align*}
then
\begin{equation}
F_{1}\left(  n,k\right)  =\sum_{j=0}^{n}c_{j}\left(  n,k\right)
,\label{F1exp1}%
\end{equation}
is the conjectured formula. The degree of $c_{j}\left(  n,k\right)  $ is
$-2j$. We use the descending Pochhammer symbol $\left(  a\right)  _{\left(
n\right)  }=\prod_{i=1}^{n}\left(  a+1-i\right)  =\left(  -1\right)
^{n}\left(  -a\right)  _{n}$. If the conjecture is valid then for $1\leq j\leq
n$
\begin{gather*}
c_{j}\left(  n,k\right)  =\frac{\left(  n\right)  _{\left(  j\right)  }\left(
k\right)  _{\left(  j\right)  }}{j!\left(  n+k-1\right)  _{\left(
2j-1\right)  }}\\
\times\sum_{i=0}^{j-1}\frac{1}{i!}\left(  n+k-1\right)  _{\left(
j-1-i\right)  }\left(  i+j-1\right)  _{\left(  2i\right)  }F_{2}\left(
j-i,n+k+i-j\right)  R\left(  n+i-j,k\right)  .
\end{gather*}
This generalizes the examples found above. For generic $k$ there is the
expression
\begin{align}
F_{1}\left(  n,k\right)   &  =\frac{\left(  k+1\right)  _{n}\left(  k+\frac
{3}{2}\right)  _{n}\left(  k+2\right)  _{n}}{2^{6n}\left(  k+3\right)
_{n}\left(  2k+\frac{11}{2}\right)  _{2n}}\label{F1exp2}\\
&  \times~_{5}F_{4}\left(
\genfrac{}{}{0pt}{}{-n,1,\frac{1}{2},-k,-2k-2n-\frac{7}{2}}{-k-n-1,-k-n-\frac
{1}{2},-\frac{k+n}{2},-\frac{k+n-1}{2}}%
;1\right)  .\nonumber
\end{align}
This sum is a terminating balanced hypergeometric series. (\textquotedblleft
balanced\textquotedblright\ means the sum of the numerator parameters + 1
equals the sum of the denominator parameters.) However the $_{5}F_{4}$-sum is
symmetric in $\left(  n,k\right)  $ and the summation range is $0\leq
j\leq\min\left(  n,k\right)  $. When $0\leq k<n$ this omits the terms in the
first formula for the range $0\leq n-j<\frac{n-k}{2}$. For this case the best
way is to use equation (\ref{F1exp1}) (or else use (\ref{F1exp2}) with generic
$k$ to compute the rational function, then substitute the desired integer
value for $k$).

The special case $k=0$ is:%
\[
F_{1}\left(  n,0\right)  =\frac{2\left(  2n+1\right)  !}{2^{8n}\left(
n+2\right)  \left(  \frac{11}{2}\right)  _{2n}}+\frac{\left(  2n\right)
!\left(  -2n-\frac{7}{2}\right)  _{n}}{2^{6n}\left(  3\right)  _{n}\left(
\frac{11}{2}\right)  _{2n}}~_{4}F_{3}\left(
\genfrac{}{}{0pt}{}{-\frac{n-2}{2},-\frac{n-1}{2},2,\frac{3}{2}}{\frac{1}%
{2}-n,1-n,\frac{9}{2}+n}%
;1\right)  .
\]
Another interesting special case is $k=n$:%
\begin{gather*}
\left\langle \left\vert \rho^{PT}\right\vert ^{n}\left\vert \rho\right\vert
^{n}\right\rangle =F_{1}\left(  n,n\right)  F_{0}\left(  n\right) \\
=\frac{\left(  2n\right)  !\left(  \frac{3}{2}\right)  _{2n}}{2^{12n}\left(
\frac{11}{2}\right)  _{4n}\left(  n+1\right)  }~_{4}F_{3}\left(
\genfrac{}{}{0pt}{}{\ -n,1,\frac{1}{2},-4n-\frac{7}{2}}{-2n-1,-2n-\frac{1}%
{2},\frac{1}{2}-n}%
;1\right)  .
\end{gather*}
The conjecture for $F_{1}\left(  n,k\right)  $ has been checked by
computer-aided symbolic algebra up to $n=13$.

\subsection{Gaussian quadrature} \label{Gaussian}

The method of Gaussian quadrature based on orthogonal polynomials can be
applied to the density problem (see \cite[Thm. 3.4.2, p.48]{szego}). Suppose
$\mu$ is a probability measure supported on a bounded interval $\left[
a,b\right]  $ and the moments are $\mu_{j}:=\int_{a}^{b}x^{j}d\mu\left(
x\right)  $. The orthogonal polynomials $\left\{  P_{n}\left(  x\right)
:n=0,1,2,\ldots\right\}  $ for $\mu$ (where $P_{n}$ is of degree $n$ and
$\int_{a}^{b}x^{j}P_{n}\left(  x\right)  d\mu\left(  x\right)  =0$ for $0\leq
j<n$) are determined by the moment sequence. Solve the linear system%
\[
\sum_{i=0}^{n-1}a_{i}\mu_{i+j}=-\mu_{j+n},0\leq j\leq n-1
\]
to obtain the coefficients $\left\{  a_{i}\right\}  $ for the monic orthogonal
polynomial%
\[
P_{n}\left(  x\right)  =x^{n}+\sum_{i=0}^{n-1}a_{i}x^{i}.
\]
Then $P_{n}$ has $n$ distinct zeros $\lambda_{1}<\lambda_{2}<\ldots
<\lambda_{n}$, contained in $\left(  a,b\right)  .$The structural constant
$h_{n}=\int_{a}^{b}P_{n}\left(  x\right)  ^{2}d\mu\left(  x\right)
=\sum_{i=0}^{n-1}a_{i}\mu_{i+n}+\mu_{2n}$. The Gaussian quadrature rule with
$n$ nodes is%
\begin{align*}
\mathcal{G}_{n}\left(  p\right)   &  =\sum_{i=1}^{n}w_{n,i}p\left(
\lambda_{i}\right)  ,\\
w_{n,i}  &  =\frac{h_{n-1}}{P_{n}^{^{\prime}}\left(  \lambda_{i}\right)
P_{n-1}\left(  \lambda_{i}\right)  },1\leq i\leq n.
\end{align*}
Then $\mathcal{G}_{n}\left(  p\right)  =\int_{a}^{b}pd\mu$ for all polynomials
$p$ of degree $\leq2n-1$. The sequence of discrete measures $\left\{
\mathcal{G}_{n}:n=2,3,\ldots\right\}  $ converges weak-* (in the dual space of
$C\left[  a,b\right]  $) to the measure $\mu$. The piecewise linear graph
formed by consecutively joining $\left[  a,0\right]  $, $\left[  \frac{1}%
{2}\left(  \lambda_{1}+\lambda_{2}\right)  ,w_{1}\right]  $, $\left[  \frac
{1}{2}\left(  \lambda_{2}+\lambda_{3}\right)  ,w_{1}+w_{2}\right]  $,
$\ldots\left[  \frac{1}{2}\left(  \lambda_{i}+\lambda_{i+1}\right)
,\sum_{j=1}^{i}w_{j}\right]  $, $\ldots,\left[  b,1\right]  $ is an
approximation to the cumulative distribution function of $\mu$. (Consider this
as a sort of mid-point integration rule.)

The orthonormal polynomials satisfy the three-term recurrence%
\[
xp_{n}\left(  x\right)  =\alpha_{n}p_{n+1}\left(  x\right)  +\beta_{n}%
p_{n}\left(  x\right)  +\alpha_{n-1}p_{n-1}\left(  x\right)  ,p_{-1}%
=0,p_{1}=1
\]
The most common approach to the computations is to find the coefficients
$\left\{  \alpha_{i},\beta_{i}\right\}  $ directly from the moments. This is
known to be a numerically ill-conditioned problem, so a relatively large
number of significant digits must be used in the calculation. The algorithm of
\cite[p.476]{Sack} with 30-digit floating-point arithmetic was used here. The
computed values were checked for accuracy by evaluating the errors
\[
\varepsilon_{j}=\mu_{j}-\sum_{i=1}^{n}w_{n,i}\lambda_{i}^{j},0\leq j\leq2n-1.
\]
For the moments of $16\left\vert \rho^{PT}\right\vert $ and $n=20$ we obtain

$%
\begin{bmatrix}
\lambda & -.9501 & -.9081 & -.8587 & -.8024 & -.7402\\
w & .2714\cdot10^{-10} & .1397\cdot10^{-8} & .2416\cdot10^{-7} &
.2337\cdot10^{-6} & .1553\cdot10^{-5}%
\end{bmatrix}
\medskip$

$%
\begin{bmatrix}
\lambda & -.6734 & -.6032 & -.5309 & -.4581 & -.3860\\
w & .7908\cdot10^{-5} & .3293\cdot10^{-4} & .1171\cdot10^{-3} & .3669\cdot
10^{-3} & .1034\cdot10^{-2}%
\end{bmatrix}
\medskip$

$%
\begin{bmatrix}
\lambda & -.3160 & -.2495 & -.1877 & -.1317 & -.08248\\
w & .2671\cdot10^{-2} & .6408\cdot10^{-2} & .1446\cdot10^{-1} & .3111\cdot
10^{-1}\  & .6499\cdot10^{-1}%
\end{bmatrix}
\medskip$

$%
\begin{bmatrix}
\lambda & -.04104 & -.008293\  & .01040 & .02973 & .04698\\
w & .1372 & .3467 & .3440 & .4894\cdot10^{-1} & .1994\cdot10^{-2}%
\end{bmatrix}
.\medskip$

One observes that a majority of the zeros are in $\left[  -1,0\right]  $ and
most of the mass is contained in $\left[  -0.05,0.05\right]  $. With $n=30$ we
find $4$ zeros in $\left(  0,\frac{1}{16}\right)  $; linear interpolation of
the c.d.f. gives $\Pr\left\{  \left\vert \rho^{PT}\right\vert >0\right\}
\simeq0.42924$.

The distribution of $2^{16}\left\vert \rho\right\vert \left\vert \rho
^{PT}\right\vert $ is somewhat more spread out over the interval. For $n=30$
we find $17$ zeros in $\left(  0,1\right)  $, and linear interpolation yields
$\Pr\left\{  \left\vert \rho^{PT}\right\vert >0\right\}  \simeq0.46129$.

\subsection{Conjectures for the complex case} \label{ComplexConjectures}

Here we consider the question of moments of $\left\vert \varrho^{PT}%
\right\vert $ when $\rho$ is a 4-by-4 Hermitian positive-definite matrix of
trace one. The conjectured formulae have been verified for $n=1,2,3,4$ (see
the previous sections). The conjecture was arrived at by inspecting the real
case and using an analogous approach to the computed examples. It is
interesting that the real and complex conjectured formulae can be combined
into one formula with a parameter $\alpha$. Set $\alpha=\frac{1}{2}$ for the
real case, $\alpha=1$ for the complex case. One could speculate whether
$\alpha=2$ is related to a quaternionic or symplectic version \cite{andai,dorje,quartic}. The general
formulae are%
\begin{gather*}
\left\langle \left\vert \rho\right\vert ^{k}\right\rangle =\frac{k!\left(
\alpha+1\right)  _{k}\left(  2\alpha+1\right)  _{k}}{2^{6k}\left(
3\alpha+\frac{3}{2}\right)  _{k}\left(  6\alpha+\frac{5}{2}\right)  _{2k}},\\
\left\langle \left\vert \rho^{PT}\right\vert ^{n}\left\vert \rho\right\vert
^{k}\right\rangle /\left\langle \left\vert \rho\right\vert ^{k}\right\rangle
=\frac{1}{2^{6n}\left(  k+3\alpha+\frac{3}{2}\right)  _{n}\left(
2k+6\alpha+\frac{5}{2}\right)  _{2n}}\\
\times\sum_{j=0}^{n}4^{j}\binom{n}{j}\left(  \alpha\right)  _{j}\left(
\alpha+\frac{1}{2}\right)  _{j}\left(  k-j+1\right)  _{n-j}\\
\times\left(  -2k-2n-1-5\alpha\right)  _{j}\left(  k+1+\alpha\right)
_{n-j}\left(  k+2+\alpha\right)  _{n-j}.
\end{gather*}
For generic $k$ this formula can be written as%
\begin{align*}
&  \left\langle \left\vert \rho^{PT}\right\vert ^{n}\left\vert \rho\right\vert
^{k}\right\rangle /\left\langle \left\vert \rho\right\vert ^{k}\right\rangle
\\
&  =\frac{\left(  k+1\right)  _{n}\left(  k+1+\alpha\right)  _{n}\left(
k+1+2\alpha\right)  _{n}}{2^{6n}\left(  k+3\alpha+\frac{3}{2}\right)
_{n}\left(  2k+6\alpha+\frac{5}{2}\right)  _{2n}}\\
&  \times~_{5}F_{4}\left(
\genfrac{}{}{0pt}{}{-n,-k,\alpha,\alpha+\frac{1}{2},-2k-2n-1-5\alpha
}{-k-n-\alpha,-k-n-2\alpha,-\frac{k+n}{2},-\frac{k+n-1}{2}}%
;1\right)  .
\end{align*}
The special case $n=k$ is%
\begin{gather*} \label{nequalk}
\left\langle \left\vert \rho\right\vert ^{n}\left\vert \rho^{PT}\right\vert
^{n}\right\rangle \\
=\frac{\left(  2n\right)  !\left(  1+\alpha\right)  _{2n}\left(
1+2\alpha\right)  _{2n}}{2^{12n}\left(  3\alpha+\frac{3}{2}\right)
_{2n}\left(  6\alpha+\frac{5}{2}\right)  _{4n}}~_{4}F_{3}\left(
\genfrac{}{}{0pt}{}{\ -n,\alpha,\alpha+\frac{1}{2},-4n-1-5\alpha
}{-2n-\alpha,-2n-2\alpha,\frac{1}{2}-n}%
;1\right)  .
\end{gather*}
For $k=0$ we have%
\begin{gather*} \label{nequalzero}
\left\langle \left\vert \rho^{PT}\right\vert ^{n}\right\rangle =\frac
{n!\left(  \alpha+1\right)  _{n}\left(  2\alpha+1\right)  _{n}}{2^{6n}\left(
3\alpha+\frac{3}{2}\right)  _{n}\left(  6\alpha+\frac{5}{2}\right)  _{2n}}\\
+\frac{\left(  -2n-1-5\alpha\right)  _{n}\left(  \alpha\right)  _{n}\left(
\alpha+\frac{1}{2}\right)  _{n}}{2^{4n}\left(  3\alpha+\frac{3}{2}\right)
_{n}\left(  6\alpha+\frac{5}{2}\right)  _{2n}}~_{5}F_{4}\left(
\genfrac{}{}{0pt}{}{-\frac{n-2}{2},-\frac{n-1}{2},-n,\alpha+1,2\alpha
+1}{1-n,n+2+5\alpha,1-n-\alpha,\frac{1}{2}-n-\alpha}%
;1\right)  ;
\end{gather*}
because of the denominator parameter $1-n$ it is necessary to replace the
$_{5}F_{4}$-sum by $1$ to obtain the correct value when $n=1$.
\subsection{Lower-dimensional ("non-generic") case study} \label{fourparameter}
In the Cholesky method, set five of the off-diagonal entries to zero; the positive
matrix $\rho$ is $%
\begin{bmatrix}
x_{1}^{2} & 0 & 0 & 0\\
0 & x_{2}^{2} & x_{2}x_{5} & 0\\
0 & x_{2}x_{5}^{\ast} & x_{3}^{2}+x_{5}x_{5}^{\ast} & 0\\
0 & 0 & 0 & x_{4}^{2}%
\end{bmatrix}
$ 
(where $x_{i}\geq0,1\leq i\leq4$, and $x_{5}$ comes from $\mathbb{R}^{\beta
}$, equipped with an algebra structure including conjugation and a norm, e.g. $\beta=2,\mathbb{C}$ )

Then $|\rho|=x_{1}^{2}x_{2}^{2}x_{3}^{2}x_{4}^{2}$, and $|\rho^{PT}|=x_{2}%
^{2}\left(  x_{3}^{2}+x_{5}x_{5}^{\ast}\right)  \left(  x_{1}^{2}x_{4}%
^{2}-x_{2}^{2}x_{5}x_{5}^{\ast}\right)  .$ Consider $\rho$ as an element of
$\mathbb{R}^{4+\beta}$; then the Jacobian for the map $C\mapsto C^{\ast}C$ is
$J=16x_{1}x_{2}^{\beta+1}x_{3}x_{4}$. Write $x_{5}x_{5}^{\ast}=\left\vert
x_{5}\right\vert ^{2}$. With the usual Dirichlet integral techniques,
integrating over the unit sphere $\sum_{i=1}^{4}x_{i}^{2}+\left\vert
x_{5}\right\vert ^{2}=1$ we get the normalized integral%
\begin{align*}
& \int x_{1}^{2m_{1}}x_{2}^{2m_{2}}x_{3}^{2m_{3}}x_{4}^{2m_{4}}\left\vert
x_{5}\right\vert ^{2m_{5}}\left(  |\rho|\right)  ^{k}J\left(  x\right)
d\mu\\
& =\frac{\left(  k+1\right)  _{m_{1}}\left(  k+1\right)  _{m_{3}}\left(
k+1\right)  _{m_{4}}\left(  k+1+\frac{\beta}{2}\right)  _{m_{2}}\left(
\frac{\beta}{2}\right)  _{m_{5}}}{\left(  4+\beta+4k\right)  _{\left\vert
m\right\vert }}\delta\left(  k\right)  ,
\end{align*}
where $\left\vert m\right\vert =\sum_{i=1}^{5}m_{i}$; and%
\[
\delta\left(  k\right)  :=\int\left(  |\rho|\right)  ^{k}J\left(  x\right)
d\mu=\frac{k!^{3}\left(  1+\frac{\beta}{2}\right)  _{k}}{\left(
4+\beta\right)  _{4k}}.
\]
Then%
\begin{align*}
& \int\left(  |\rho^{PT}|\right)  ^{n}\left(  |\rho|\right)  ^{k}J\left(
x\right)  d\mu\\
& =\delta\left(  k\right)  \frac{\left(  k+1\right)  _{n}^{2}\left(
k+1+\frac{\beta}{2}\right)  _{n}^{2}}{\left(  4+\beta+4k\right)  _{4n}}\\
& \times~_{4}F_{3}\left(
\genfrac{}{}{0pt}{}{-n,k+1+\frac{\beta}{2}+n,k+1+\frac{\beta}{2}+n,\frac
{\beta}{2}}{-k-n,-k-n,k+1+\frac{\beta}{2}}%
;1\right)  .
\end{align*}
Proof: Expanding%
\begin{align*}
\left(  |\rho^{PT}|\right)  ^{n}  & =x_{2}^{2n}\left(  x_{3}^{2}+\left\vert
x_{5}\right\vert ^{2}\right)  ^{n}\left(  x_{1}^{2}x_{4}^{2}-x_{2}%
^{2}\left\vert x_{5}\right\vert ^{2}\right)  ^{n}\\
& =\sum_{i,j=0}^{n}\binom{n}{i}\binom{n}{j}\left(  -1\right)  ^{j}%
x_{1}^{2n-2j}x_{2}^{2n+2j}x_{3}^{2n-2i}x_{4}^{2n-2j}\left\vert x_{5}%
\right\vert ^{2i+2j}.
\end{align*}
Now integrate with the above formula (value divided by $\delta\left(
k\right)  $) to obtain%
\begin{align*}
& \frac{1}{\left(  4+\beta+4k\right)  _{4n}}\sum_{i,j=0}^{n}\binom{n}{i}%
\binom{n}{j}\left(  -1\right)  ^{j}\left(  k+1\right)  _{n-j}^{2}\left(
k+1+\frac{\beta}{2}\right)  _{n+j}\\
& \times\left(  k+1\right)  _{n-i}\left(  \frac{\beta}{2}\right)  _{i+j}\\
& =\frac{1}{\left(  4+\beta+4k\right)  _{4n}}\sum_{j=0}^{n}\binom{n}{j}\left(
-1\right)  ^{j}\left(  k+1\right)  _{n-j}^{2}\left(  k+1+\frac{\beta}%
{2}\right)  _{n+j}\left(  \frac{\beta}{2}\right)  _{j}\\
& \times\sum_{i=0}^{n}\binom{n}{i}\left(  k+1\right)  _{n-i}\left(
\frac{\beta}{2}+j\right)  _{i},
\end{align*}
by the Chu-Vandermonde sum, the second line equals $\left(  k+1+\frac{\beta}%
{2}+j\right)  _{n}$. Use the substitutions
\begin{align*}
\left(  k+1\right)  _{n-j}  & =\left(  -1\right)  ^{j}\dfrac{\left(
k+1\right)  _{n}}{\left(  -k-n\right)  _{j}},\\
\left(  k+1+\frac{\beta}{2}+j\right)  _{n}  & =\frac{\left(  k+1+\frac{\beta
}{2}\right)  _{n}\left(  k+1+\frac{\beta}{2}+n\right)  _{j}}{\left(
k+1+\frac{\beta}{2}\right)  _{j}},\\
\binom{n}{j}\left(  -1\right)  ^{j}  & =\frac{\left(  -n\right)  _{j}}{j!}%
\end{align*}
in the $j$-sum to produce the stated formula.

Example:
\begin{gather*}
\int\left(  |\rho^{PT}|\right)  \left(  |\rho|\right)  ^{k}J\left(
x\right)  d\mu=\frac{\delta\left(  k\right)  }{\left(  4+\beta+4k\right)
_{4}}\\
\times\frac{1}{4}\left(  2k+2+\beta\right)  \left\{  \left(  k+1\right)
^{2}\left(  2k+2+\beta\right)  -\frac{1}{4}\beta\left(  2k+4+\beta\right)
^{2}\right\}
\end{gather*}

\begin{acknowledgments}
I would like to express appreciation to the Kavli Institute for Theoretical
Physics (KITP)
for computational support in this research, and Christian Krattenthaler, Mihai Putinar, Robert Mnatsakanov, Mark Coffey, Karol \.{Z}yczkowski and \"{O}mer E\u{g}ecio\u{g}lu for various communications. Further, Serge Provost, Jean Lasserre, Partha Biswas and Luis G. Medeiros de Souza provided guidance on reconstruction of probability distributions from moments. The earlier stages of the computations were greatly assisted by the Mathematica expertise of Michael Trott, and the later stages by the mathematical insights and suggestions of  Charles Dunkl. A referee requested greater clarification of certain basic concepts.
\end{acknowledgments}

\bibliography{WW15C}

\end{document}